\newcommand{\tr}[1]{\textrm{#1}}
\newcommand{\mr}[1]{\mathrm{#1}}
\newcommand{\tnr}[1]{{\textnormal{#1}}}
\newcommand{\mc}[1]{\mathcal{#1}}
\newcommand{\mf}[1]{\mathsf{#1}}
\newcommand{\ms}[1]{\mathds{#1}}
\newcommand{\ov}[1]{\overline{#1}}
\newcommand{\bb}{\boldsymbol{b}}
\newcommand{\bs}{\boldsymbol{s}}
\newcommand{\bx}{\boldsymbol{x}}
\newcommand{\by}{\boldsymbol{y}}
\newcommand{\bz}{\boldsymbol{z}}
\newcommand{\figref}[1]{Fig.~\ref{#1}}
\newcommand{\secref}[1]{Sec.~\ref{#1}}
\newcommand{\exref}[1]{Example~\ref{#1}}
\newcommand{\propref}[1]{Proposition~\ref{#1}}
\newcommand{\corref}[1]{Corollary~\ref{#1}}
\newcommand{\ie}{i.e.,~} 		
\newcommand{\eg}{e.g.,~}	
\newcommand{\cf}{cf.~}		
\renewcommand{\emptyset}{\varnothing} 
\newcommand{\argmax}{\mathop{\mr{argmax}}}
\newcommand{\set}[1]{\{#1\}}
\newcommand{\SET}[1]{\left\{#1\right\}}
\newcommand{\cd}{\cdot}
\newcommand{\ld}{\ldots}
\newcommand{\PR}[1]{\Pr\SET{#1}}       	
\newcommand{\pdf}{p}            			
\newcommand{\IND}[1]{\ms{I}_{[#1]}}   	
\newcommand{\Ex}{\ms{E}}     			
\newcommand{\dd}{\,\mr{d}}             		
\newcommand{\mcD}{\mc{D}}
\newcommand{\mcL}{\mc{L}}
\newcommand{\mcR}{\mc{R}}
\newcommand{\mfR}{\mf{R}}
\newcommand{\mfT}{\mf{T}}
\newcommand{\SNR}{\mathsf{snr}}  
\newcommand{\SNRrv}{\mathsf{SNR}}  
\newcommand{\SNRav}{\ov{\mathsf{snr}}}  
\newcommand{\R}{R}              	
\newcommand{\Nb}{{\mathop{N_\tnr{b}}}} 	
\newcommand{\Ns}{{\mathop{N_\tnr{s}}}} 	
\newcommand{\Nsl}[1]{{\mathop{N_{\tnr{s},#1}}}} 	
\newcommand{\dBval}{\tnr{dB}} 
\newcommand{\amc}{\tnr{amc}}
\newcommand{\harq}{{\tnr{harq}}}
\newcommand{\trp}{{\tnr{2r}}}
\newcommand{\nack}{\mf{NACK}}  
\newcommand{\err}{\mf{Err}}  
\newcommand{\kmax}{K}		
\newcommand{\PER}{\mathsf{PER}}  
\definecolor{refkey}{rgb}{0.2,0.8,0.6} 
\definecolor{labelkey}{rgb}{0.2,0.3,0.4} 
\newacronym[\glsshortpluralkey=PDFs,\glslongpluralkey=probability density functions]{pdf}{PDF}{probability density function}
\newacronym[\glsshortpluralkey=CDFs,\glslongpluralkey=cumulative density functions]{cdf}{CDF}{cumulative density function}
\newacronym[\glsshortpluralkey=CCDFs,\glslongpluralkey=complementary cumulative density functions]{ccdf}{CDF}{complementary cumulative density function}
\newacronym[\glsshortpluralkey=PMFs,\glslongpluralkey=probability mass functions]{pmf}{PMF}{probability mass function}
\newacronym[]{lhs}{l.h.s.}{left-hand side}
\newacronym[]{rhs}{r.h.s.}{right-hand side} 
\newacronym[]{bicm}{BICM}{bit-interleaved coded modulation}
\newacronym[]{bicmid}{BICM-ID}{BICM with iterative demapping}
\newacronym[]{cm}{CM}{coded modulation}
\newacronym[]{tcm}{TCM}{trellis-coded modulation}
\newacronym[]{mlc}{MLC}{multi-level coding}
\newacronym[]{pam}{PAM}{pulse amplitude modulation}
\newacronym[]{bpsk}{BPSK}{binary phase shift keying}
\newacronym[]{qam}{QAM}{quadrature amplitude modulation}
\newacronym[]{psk}{PSK}{phase shift keying}
\newacronym[\glsshortpluralkey=LLRs,\glslongpluralkey=logarithmic likelihood ratios]{llr}{LLR}{logarithmic likelihood ratio}
\newacronym[]{map}{MAP}{maximum a posteriori}
\newacronym[]{ml}{ML}{maximum likelihood}
\newacronym[\glsshortpluralkey=MIs,\glslongpluralkey=mutual informations]{mi}{MI}{mutual information}
\newacronym[\glsshortpluralkey=GMIs,\glslongpluralkey=generalized mutual informations]{gmi}{GMI}{generalized mutual information}
\newacronym[]{bicm-gmi}{BICM-GMI}{BICM generalized mutual information}
\newacronym[]{awgn}{AWGN}{additive white Gaussian noise}
\newacronym[]{amc}{AMC}{adaptive modulation and coding}
\newacronym[]{csi}{CSI}{channel state information}
\newacronym[]{cqi}{CQI}{channel quality indicator}
\newacronym[]{sp}{SP}{set-partitioning}
\newacronym[]{gsm}{GSM}{global system for mobile communications}
\newacronym[]{edge}{EDGE}{enhanced data rates for GSM evolution}
\newacronym[]{3gpp}{3GPP}{3rd generation partnership project}
\newacronym[]{lte}{LTE}{Long Term Evolution}
\newacronym[]{dvb}{DVB}{digital video broadcasting}
\newacronym[\glsshortpluralkey=CCs,\glslongpluralkey=convolutional codes]{cc}{CC}{convolutional code}
\newacronym[\glsshortpluralkey=PCCCs,\glslongpluralkey=parallel concatenated convolutional codes]{pccc}{PCCC}{parallel concatenated convolutional code}
\newacronym[\glsshortpluralkey=TCs,\glslongpluralkey=turbo codes]{tc}{TC}{turbo code}
\newacronym{ldpc}{LDPC}{low-density parity-check}
\newacronym[]{ofdm}{OFDM}{orthogonal frequency-division multiplexing}
\newacronym[]{bep}{BEP}{bit-error probability}
\newacronym[]{wep}{WEP}{word-error probability}
\newacronym[]{sep}{SEP}{symbol-error probability}
\newacronym[]{ttcm}{TTCM}{turbo-trellis coded modulation}
\newacronym[]{uep}{UEP}{unequal error protection}
\newacronym[\glsshortpluralkey=CENCs,\glslongpluralkey=convolutional encoders]{cenc}{CENC}{convolutional encoder}
\newacronym[]{mimo}{MIMO}{multiple-input multiple-output}
\newacronym[\glsshortpluralkey=SNRs,\glslongpluralkey=signal-to-noise ratios]{snr}{SNR}{signal-to-noise ratio}
\newacronym[]{msb}{MSB}{most-significative bit}
\newacronym[]{bcjr}{BCJR}{Bahl--Cocke--Jelinek--Raviv}
\newacronym[\glsshortpluralkey=SEDs,\glslongpluralkey=squared Euclidean distances]{sed}{SED}{squared Euclidean distance}
\newacronym[\glsshortpluralkey=EDs,\glslongpluralkey=Euclidean distances]{ed}{ED}{Euclidean distance}
\newacronym[\glsshortpluralkey=MEDs,\glslongpluralkey=minimum Euclidean distances]{med}{MED}{minimum Euclidean distance}
\newacronym[]{core}{CoRe}{constellation rearrangement}
\newacronym[]{msd}{MSD}{multistage decoding}
\newacronym[]{pdl}{PDL}{parallel decoding of the individual levels}
\newacronym[\glsshortpluralkey=GCs,\glslongpluralkey=Gray codes]{gc}{GC}{Gray code}
\newacronym[]{brgc}{BRGC}{binary-reflected Gray code}
\newacronym[]{nbc}{NBC}{natural binary code}
\newacronym[]{fbc}{FBC}{folded-binary code}
\newacronym[]{bsgc}{BSGC}{binary semi-Gray code}
\newacronym[]{msp}{MSP}{modified set-partitioning}
\newacronym[]{ssp}{SSP}{semi set-partitioning}
\newacronym[]{fhd}{FHD}{free Hamming distance}
\newacronym[]{mfhd}{MFHD}{maximum free Hamming distance}
\newacronym[]{ods}{ODS}{optimal distance spectrum}
\newacronym[]{iud}{i.u.d.}{independent and uniformly distributed}
\newacronym[]{ud}{u.d.}{uniformly distributed}
\newacronym[]{iid}{i.i.d.}{independent, identically distributed}
\newacronym[]{bico}{BICO}{binary-input continuous-output}
\newacronym[]{gh}{GH}{Gauss--Hermite}
\newacronym[\glsshortpluralkey=BSs,\glslongpluralkey=base-stations]{bs}{BS}{base-station}
\newacronym[\glsshortpluralkey=MSs,\glslongpluralkey=mobile-stations]{ms}{MS}{mobile-station}
\newacronym[]{phy}{PHY}{physical layer} 
\newacronym[]{llc}{LLC}{logical link control} 
\newacronym[]{mac}{MAC}{media access control} 
\newacronym[]{fft}{FFT}{fast Fourier transform} 
\newacronym[]{cf}{CF}{characteristic function} 
\newacronym[]{mgf}{MGF}{moment generating function} 
\newacronym[]{ee}{EE}{energy efficiency} 
\newacronym[]{kkt}{KKT}{Karush--Kuhn--Tucker} 
\newacronym[\glsshortpluralkey=MCSs,\glslongpluralkey=modulation/coding schemes]{mcs}{MCS}{modulation/coding scheme} 
\newacronym[]{fec}{FEC}{forward error correction}
\newacronym[]{arq}{ARQ}{automatic repeat request}
\newacronym[]{harq}{HARQ}{hybrid ARQ}
\newacronym[]{trp}{2R-HARQ}{two-rounds HARQ}
\newacronym[]{tarq}{TARQ}{truncated HARQ}
\newacronym[]{ccharq}{HARQ-RR}{repetition redundancy HARQ}
\newacronym[]{rrharq}{HARQ-RR}{repetition redundancy HARQ}
\newacronym[]{irharq}{HARQ-IR}{incremental redundancy HARQ}
\newacronym[]{vlharq}{HARQ-VL}{variable-length HARQ}
\newacronym[]{pdharq}{HARQ-PD}{packet-dropping HARQ}
\newacronym[]{ack}{ACK}{positive acknowledgment}
\newacronym[]{nack}{NACK}{negative acknowledgment}
\newacronym[]{crc}{CRC}{cyclic redundancy check}
\newacronym[]{dp}{DP}{dynamic programming}
\newacronym[]{gp}{GP}{geometric programming}
\newacronym[]{per}{PER}{packet error rate}
\newacronym[]{op}{OP}{outage probability}
\newacronym[]{spa}{SPA}{saddle-point approximation}
\newacronym[]{mrc}{MRC}{maximum ratio combining}
\newacronym[]{mdp}{MDP}{Markov decision process}
\newacronym[]{pf}{PF}{Proportional Fairness}
\newacronym[]{rr}{RR}{Round Robin}
\newtheorem{corollary}{Corollary}
\newtheorem{proposition}{Proposition}
\newtheorem{example}{Example}
\newcommand{\siz}{1.2}  
\newcommand{\sizf}{0.6}   
\title{HARQ and AMC: Friends or Foes?}
\author{Redouane Sassioui, Mohammed Jabi, Leszek Szczecinski, Long Bao Le, Mustapha Benjillali, and Benoit Pelletier
\\
\thanks{This work was submitted in part to IEEE Global Communications Conference 2016, Washington, DC, USA.
}
\thanks{%
R.~Sassioui, M.~Jabi, L.~Szczecinski, and L.~B.~Le are with INRS-EMT, Montreal, Canada. [e-mail: \{sassioui,~jabi,~leszek,~long.le\}@emt.inrs.ca]}%
\thanks{%
M. Benjillali is with the Communication Systems Department, INPT, Rabat, Morocco. [e-mail: benjillali@ieee.org]}%
\thanks{%
Benoit Pelletier is with InterDigital Communications, Montreal, Canada. [e-mail: Benoit.Pelletier@interdigital.com]}%
\thanks{%
The work was supported by the government of Quebec, under grant \#182388.}%
}%
\begin{document}

\maketitle

\begin{abstract}
To ensure reliable communication in randomly varying and error-prone channels, wireless systems use \gls{amc} as well as \gls{harq}. In order to elucidate their compatibility and interaction, we compare the throughput provided by \gls{amc}, \gls{harq}, and their combination (\gls{amc}-\gls{harq}) under two operational conditions: in slow- and  fast block-fading channels. Considering both,  \gls{irharq} and \gls{rrharq} we optimize the rate-decision regions for \gls{amc}/\gls{harq} and compare them in terms of attainable throughput. Under a fairly general model of the channel variation and the decoding functions, we conclude that i)~adding \gls{harq} on top of \gls{amc} may be counterproductive in the high average signal-to-noise ratio regime for fast fading channels, and ii)~\gls{harq} is useful for slow fading channels, but it provides moderate throughput gains. We provide explanations for these results which allow us to propose  paths to improve AMC-HARQ systems.
\end{abstract}

\begin{keywords}
AMC, ARQ, Fading Channels, HARQ, Throughput, Variable-Length HARQ.
\end{keywords}

\section{Introduction}

\gls{amc} and \gls{harq} are two transmission strategies commonly used in modern wireless systems to communicate over error-prone and time-varying channels. The main objective of this work is to compare \gls{amc} and \gls{harq} from the throughput point of view, which is a well-established reference criterion particularly suited for transmission of delay-insensitive contents.

In broad terms, \gls{amc} consists in adjusting the transmission parameters (such as the modulation type, the coding rate, and/or the transmission power) to the channel conditions \cite{Alouini00,Goeckel99}; these are most often defined by a predefined set of \glspl{mcs}. The receiver selects a suitable \gls{mcs} and conveys its index to the transmitter via a feedback channel. 

Transmission errors, unavoidable in any practical system, are  handled by the retransmission protocol known as \gls{arq}\cite{Lin03_Book}, where the receiver uses a feedback channel to inform the transmitter about a successful decoding---via a \gls{ack} message---or about a decoding failure---via a \gls{nack} message. Each \gls{nack}  triggers a new \emph{transmission round} (or a \emph{retransmission}), and increasing their number improves reliability. In this work, we consider the retransmission protocol known as \emph{hybrid} \gls{arq} (\gls{harq}) in which coding is done across the transmission rounds and thus  is intimately related to the \gls{amc} for which coding and modulation are the core elements.

In more general terms, both \gls{amc} and \gls{harq} may be seen as variable rate transmission strategies: the former, due to the explicit variation of the number of bits conveyed over the channel, the latter, due to the variability of the transmission time resulting from variable number of transmission rounds. Both are thus naturally coupled and, in this work we want to clarify to what extent this coupling should be preserved or exploited.

Therefore, although the formalism of the communication layers tend to separate the \gls{amc} from \gls{harq}, the practice may call for their holistic view.  In particular, the  \gls{lte} standard specifies the \gls{harq} operation as part of the \gls{mac} layer \cite[Sec.~5.4.2]{3GPP_TS_36.321}, whereas the channel measurements procedures are defined in the \gls{phy} \cite[Sec.~7.2]{3GPP_TS_36.213}. However,  the \gls{amc}, \ie the way the \gls{mcs} should be chosen based on the reported channel measurement is  unspecified and left for implementation. We will consider the \gls{amc} and \gls{harq} as mechanisms of \gls{phy}.

The main difficulty in providing a qualitative insight into the relationship between \gls{harq} and the \gls{amc} lies in the fact that they are affected by various elements, such as the adopted channel model, the available coding/modulation schemes, or the power adaptation strategies. It is thus critical to strike a realistic balance between the general model and the plethora of clumsy technical details of the practical systems or standards. With this idea in mind, the channel model must be kept simple and we will consider two extreme cases of the operating conditions with respect to the channel correlation between \gls{harq} rounds: i)~in fast block-fading channels, all transmission rounds are carried out over independent fading realizations, and  ii)~in slow block-fading channels, all \gls{harq} transmission rounds (of the same packet) are carried out in the same operating conditions (are thus perfectly correlated). Under these models, the relationship between \gls{amc} and \gls{harq} becomes tractable while still allowing us to extrapolate the findings to the situations where the channel correlation is only partial -- a case which is difficult to tackle analytically and a numerical approach might be necessary.

Another important point is the choice of a suitable performance criterion. The throughput seen by the upper layers provides a simple and fair comparison baseline, but ignores most of the considerations related to delay or packet loss. The former is acceptable by  assuming a delay-insensitive traffic and a saturated buffer operating mode; the latter requires a few assumptions about the upper \gls{llc} layer taking care of all residual retransmission errors which then become irrelevant to the analysis (more on this in \secref{Sec:LLC}).

The numerical examples that will illustrate our conclusions and findings will be based on a simple model of the decoding errors which may be fit to the experimental data obtained with practical decoders. We will then use simplifications to characterize the behavior of \gls{irharq} when different parts of the codeword are transmitted in the different rounds \cite{Hagenauer88,Caire01}, and of \gls{rrharq} when each round carried the same codeword \cite{Benelli85,Chase85}.

The contributions of this work can be summarized as follows:
\begin{itemize}
\item We provide a general model which leads to a qualitative evaluation of the relationship between \gls{amc} and \gls{harq}. We provide the throughput expressions valid for fast- and slow-fading channel models, and compare to the formulations appearing in the literature.
\item We discuss how the transmission rate should be chosen based on the observed \gls{csi}, to maximize the throughput of \gls{amc} and \gls{amc}-\gls{harq} for fast- and slow-fading channels. 
\item We prove that, in the high average \gls{snr} regime, combining \gls{harq} with the \gls{amc} affects negatively the system throughput in fast-fading channel, and we provide an intuitive explanation of this result. We  also show that \gls{harq} is always productive for slow-fading channel.
\item Finally, we propose en evaluate two strategies to enhance the performance of \gls{harq} for fast-fading channels which remove the throughput penalty introduced by the conventional \gls{harq}.
\end{itemize}

The rest of this paper is organized as follows. The system model is described in \secref{Sec:Model}, where -- somewhat unconventionally -- \secref{Sec:state.of.art}  positions this work with respect to the literature. We found it the most appropriate because \gls{amc} and \gls{harq} have been studied under many different modeling assumptions, and it is  difficult to discuss them before the model is introduced.

The throughput analysis of  \gls{amc} and \gls{harq} is presented in \secref{Sec:AMC}, \secref{Sec:slow.fading}, and \secref{Sec:Block.fading}. The directions towards the improvement of AMC-HARQ systems are explored in \secref{Sec:Boosting.HARQ} and we present our conclusions in \secref{Sec:Conclusions}.
\section{Model}\label{Sec:Model}

\subsection{Channel Model}
We consider the communication over a block flat-fading channel, so the received signal can be written as
\begin{align}\label{y.s.snr}
\by[n]=\sqrt{\SNR[n]}\bs[n]+\bz[n],
\end{align}
where $\bz[n]$ is a zero-mean, unit-variance complex Gaussian noise, $\bs[n]$ is the signal composed of $\Ns$ symbols carrying the encoded message, and $\SNR[n]$ is the value of the \gls{snr} in the block-time $n$.

We assume that the user is granted access to the channel in predefined, non-consecutive, time blocks. The time-difference between the beginning of the  blocks $\bs[n], \bs[n+1],\ld$ is given by $(1+q)\Ns$. It is convenient to think about $q$ as an integer, which should be interpreted as the number of blocks left between two consecutive transmission.

We assume also a  transmission with a constant power and, thus, the \gls{snr} $\SNR[n]$ captures the variation of the channel. The relationship between the \gls{snr}s $\SNR[n], \SNR[n+1]\ld$, depends on the channel coherence time $\tau_\tr{coh}$, and we consider two different cases%
\begin{enumerate}
\item \textbf{Fast fading}, where $q\Ns\gg\tau_\tr{coh}$, and then the \gls{snr}s $\SNR[n], \SNR[n+1], \ld$ may be modeled as independent random variables. This corresponds to the case of fast-moving users where channel conditions change quickly from one block to another. 
\item \textbf{Slow fading}, where $q\Ns \ll \tau_\tr{coh}$, and then the transmissions experience the same \gls{snr} in many subsequent blocks, \ie $\SNR[n]\approx\SNR[n+1]\approx\SNR[n+2]\approx\ld$; this model is suitable for slowly-moving users.
\end{enumerate}

In the numerical examples, we will use the Rayleigh fading so the \gls{snr} is modeled by a random variable $\SNRrv$ following an exponential distribution
\begin{align}\label{PDF_Nakagami_inst_SNR}
\pdf_{\SNRrv}(\SNR)=(1/\SNRav)\exp\bigl(-\SNR/\SNRav\bigr),
\end{align}
where $\SNRav$ is the average \gls{snr}.%

\subsection{Physical Layer (PHY): AMC and HARQ}
 
The role of \gls{amc} consists in encoding the information bits, $\bb$ obtained from the \gls{llc}, and transmitting them over the channel. We assume that the transmission rates to be adopted are taken from the set $\mcR=\set{R_l}_{l=1}^L$ (measured in [bits/symbols]); each transmission thus carries an encoded version of $R_lN_s$ bits. This corresponds to  the reality of current systems which use predefined sets of transmission rates, each corresponding to a particular \gls{mcs} supported by both the transmitter and the receiver \cite[Sec.~7.2]{3GPP_TS_36.213}. This is  different from the approach used in \cite{Kim07}, where the optimized rates depended on $\SNRav$. We note that the performance of the system is, of course, affected by the choice of \gls{mcs}s, nevertheless, the methodology of comparison and the main conclusions we draw are valid independently of a particular choice.

The transmitter uses the rate $R_{\hat{l}}$, where $\hat{l}\in\set{1,\ld,L}$ is the \gls{mcs} index sent by the receiver over the error-free feedback channel, see \figref{fig:HARQ.AMC.model}. Due to the block-fading model \eqref{y.s.snr}, it is enough to discretize the \gls{snr}, $\SNR$ which we also assume to be perfectly estimated at the receiver, \ie
\begin{align}\label{hat.l}
\hat{l}=\sum_{l=1}^L  l~\IND{\SNR\in \mcD_l},
\end{align} 
where $\mcD_l$ is the \gls{mcs} \emph{decision region} that can be adjusted in order to maximize the criterion of interest (here, the throughput). It will be formally defined later.



\begin{figure}[tb]
\hspace{-0.5cm}
\scalebox{.7}{\pgfdeclarelayer{background}
\pgfdeclarelayer{foreground}
\pgfsetlayers{background,main,foreground}


\tikzstyle{form1} = [draw, minimum width=2cm, text width=1.8cm, fill=blue!15, 
  text centered,  minimum height=.9cm]

\tikzstyle{form2} = [draw=none, minimum width=3.4cm, text width=1.5cm, fill=none, 
  text centered,  minimum height=0cm]    

\tikzstyle{form3} = [draw,dashed, minimum width=1.8cm, text width=1.8cm, fill=red!10, 
  text centered,  minimum height=1.8cm]
  
  \tikzstyle{form4} = [draw=none, minimum width=0cm, text width=0cm, fill=none, 
  text centered,  minimum height=.2cm] 
  \tikzstyle{form5} = [draw=none, minimum width=0cm, text width=0cm, fill=black, 
  text centered,  minimum height=.6cm] 
    
    \tikzstyle{elip}=[draw,ellipse, minimum width=.2cm, text width=0cm, fill=none, 
  text centered,  minimum height=1cm]
  
    \tikzstyle{elip2}=[draw,ellipse, minimum width=.1cm, text width=0cm, fill=none, 
  text centered,  minimum height=.6cm]

\begin{tikzpicture}

    \node(Channel)[form1] {Channel};
    
     \path (Channel.west)+(-2.7,0)  node[form1](Enc) {Encoder};
     \path (Channel.east)+(2.7,0)  node[form1](Dec) {Decoder};
       \path (Enc.north)+(0,1.6)  node[form1](AMC_cont1) {AMC};
        \path (AMC_cont1.north)+(0,1.6)  node[form1](HARQ_cont1) {HARQ Controller};
         \path (HARQ_cont1.north)+(0,1.6)  node[form1](ARQ_cont1) {ARQ Controller};
          \path (HARQ_cont1.west)+(-0.8,1)  node(P1) {};
      \path (Dec.north)+(0,3.657)  node[form1](HARQ_cont2) {HARQ Controller};
         \path (HARQ_cont2.north)+(0,1.6)  node[form1](ARQ_cont2) {ARQ Controller};
       \path (HARQ_cont2.east)+(0.8,1)  node(P2) {};
       \path (Enc.west)+(-1,-0.2)  node(Buff1) {};
       \path (Enc.south)+(0,-1)  node(Tx) {Transmitter};
        \path (Dec.south)+(0,-1)  node(Rx) {Receiver};
        
         \path(ARQ_cont1.north) +(0,0.6) node(base1){};
       \path(ARQ_cont1.north) +(1,0.6) node(Buff11){};
       \path(ARQ_cont1.north) +(-1,0.6) node(Buff12){};
       \path(ARQ_cont1.north) +(1,0.8) node(Buff13){};
       \path(ARQ_cont1.north) +(-1,0.8) node(Buff14){};
       \path(ARQ_cont1.north) +(1,1) node(Buff15){};
       \path(ARQ_cont1.north) +(-1,1) node(Buff16){};
       \path(ARQ_cont1.north) +(1,1.6) node(Buff17){};
       \path(ARQ_cont1.north) +(-1,1.6) node(Buff18){};

       \path(ARQ_cont2.north) +(0,0.6) node(base2){};
       \path(ARQ_cont2.north) +(1,0.6) node(Buff21){};
       \path(ARQ_cont2.north) +(-1,0.6) node(Buff22){};
       \path(ARQ_cont2.north) +(1,0.8) node(Buff23){};
       \path(ARQ_cont2.north) +(-1,0.8) node(Buff24){};
       \path(ARQ_cont2.north) +(1,1) node(Buff25){};
       \path(ARQ_cont2.north) +(-1,1) node(Buff26){};
       \path(ARQ_cont2.north) +(1,1.6) node(Buff27){};
       \path(ARQ_cont2.north) +(-1,1.6) node(Buff28){};
%

 \path[draw] (Buff11.west)--(Buff12.east);
     \path[draw] (Buff13.west)--(Buff14.east);
     \path[draw] (Buff15.west)--(Buff16.east);
     \path[draw] (Buff11.west)--node[shift={(0.5,0)}]{$\bigg\} \bb$}(Buff17.west);
     \path[draw] (Buff12.east)--(Buff18.east);
    \path[draw,<-,>=latex] (ARQ_cont1.north)--(base1.base);

    \path[draw] (Buff21.west)--(Buff22.east);
     \path[draw] (Buff23.west)--(Buff24.east);
     \path[draw] (Buff25.west)--(Buff26.east);
     \path[draw] (Buff21.west)--(Buff27.west);
     \path[draw] (Buff22.east)--(Buff28.east);
    \path[draw,->,>=latex] (ARQ_cont2.north)--(base2.base);

 \path[draw,->,>=latex] (Enc.east)--node[shift={(0,0.3)}]{$\bs[n]$}(Channel.west);
  \path[draw,->,>=latex] (Channel.east)--node[shift={(0,0.3)}]{$\by[n]$}(Dec.west);  
  
%

     \path[draw,->,>=latex] (AMC_cont1.south)--node[shift={(0.5,0)}]{$R \in \mcR$}(Enc.north);
     \path[draw,<-,>=latex] (HARQ_cont2.south)--(Dec.north);
      \path[draw,->,>=latex] (HARQ_cont1.south)--(AMC_cont1.north);
     \path[draw,<-,>=latex] (ARQ_cont2.south)--(HARQ_cont2.north);
      \path[draw,->,>=latex] (ARQ_cont1.south)--(HARQ_cont1.north);
     
      \path (Dec.north)+(-0.5,0)  node(INT) {};
       \path (INT.north)+(0,1.6)  node(INT2) {};
            \path[draw,dashed]  (INT.north) --(INT2.south) {};

      \path[draw,<-,>=latex,dashed] (AMC_cont1.east)--node[shift={(0,.3)}]{$\tr{MCS}, \hat{l}$}(INT2.south);
      \path[draw,<-,>=latex,dashed] (ARQ_cont1.east)--node[shift={(0,.3)}]{$\tr{(N)ACK}$}(ARQ_cont2.west);
       \path[draw,<-,>=latex,dashed] (HARQ_cont1.east)--node[shift={(0,.3)}]{$\tr{(N)ACK}$}(HARQ_cont2.west);

      \path[draw,>=latex,dashed] (P1.east)--(P2.west);
       \path (P1.north)+(0,0.3)  node(gamma0) {LLC};
        \path (P1.south)+(0,-0.3)  node(gamma0) {PHY};
        \path (P2.north)+(0,0.3)  node(gamma0) {LLC};
        \path (P2.south)+(0,-0.3)  node(gamma0) {PHY};
       



\end{tikzpicture}}
\caption{Model of the transmission with the \gls{arq} implemented by the \gls{llc}; \gls{amc} and \gls{harq} belong to \gls{phy}. 
}\label{fig:HARQ.AMC.model}
\end{figure}
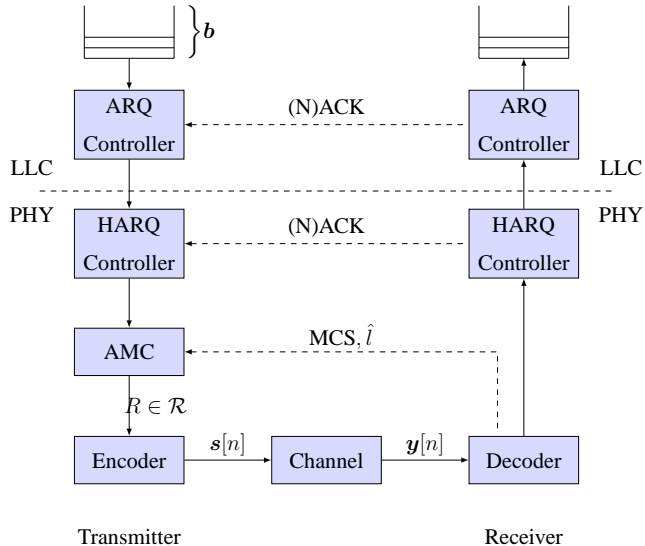

Another role of \gls{amc} is to prepare the upcoming \gls{harq} transmission rounds. To this end, \gls{amc} encodes the packet into $K$ sub-codewords $\bx_{1},\ld, \bx_{K}$ of equal length. The role of \gls{harq} is to transmit successively these sub-codewords upon reception of \gls{nack} messages. For the moment, we assume that each sub-codeword occupies the whole available block $\bs[n]$, thus, the information in each round is transmitted with the  rate $\R_l$ determined by the \gls{amc} in the first round. This is an important assumption, which allows us to focus the analysis; we will relaxe it later.


In \gls{harq}, the information packet $\bb$ is sent using many channel blocks, so it is convenient to use a packet-oriented notation and denote the respective variables using subindices, \eg $\by_{t}$ and $\SNR_t$ will denote the channel outcome and the respective \gls{snr} of the $t$-th round of the packet $\bb$.

The \gls{phy}-related actions terminate when \gls{ack} is received or when the final $K$-th \gls{harq} round is reached. The receiver then discards the channel outcomes and sends a final acknowledgement that is shared with the \gls{llc} layer which takes over the communication process.

\subsection{Logical Link Control Layer}\label{Sec:LLC}

In our model, the final acknowledgement of \gls{harq} is shared with the \gls{llc} layer as shown schematically in \figref{fig:HARQ.AMC.model}. \gls{llc} ignores the details of the operations of \gls{phy} and only relies on the final \gls{ack}/\gls{nack}, implementing thus a basic form of \gls{arq}: upon reception of \gls{ack}, the packet $\bb$ is removed from the \gls{llc} buffer; if \gls{nack} is received, the contents (bits) of $\bb$ are kept in the buffer. Then, a new packet $\bb'$ is formed, which contains $\Ns \R_{\hat{l}}$ bits ($\hat{l}$ is the \gls{mcs} index obtained by the transmitter at the moment it has to form the packet $\bb'$). The new packet $\bb'$ may contain some (or all) of the bits from the previously ``NACKed" packet $\bb$ which, in general, used a rate $\R_k\neq\R_{\hat{l}}$. 

The feedback channel carrying one-bit \gls{ack}/\gls{nack} messages at \gls{phy} and \gls{llc} is assumed to be error-free as. theoretically, these bits can be protected with arbitrary strength, whose overhead may be neglected for sufficiently large $\Ns$.\footnote{In practice, the acknowledgement messages for \gls{harq} are  received with a non-zero error probability; \eg the \gls{lte} specifies a minimum \gls{snr} requirement for reception of the \gls{ack} messages with errors no greater than 1\% \cite[Sec.~8.3.2.1]{3GPP_TS_36.104}.  On the other hand, the acknowledgement messages at the \gls{llc} are grouped for various packets and protected against errors through coding; then they are considered error-free. For tractability, we do not include these elements in our model.}

Since the packet is discarded from the \gls{llc} buffer only after \gls{ack} is received, there is no loss of data independently of how unreliable \gls{phy} is. We thus implicitly assumed that we deal with delay-tolerant but loss-sensitive applications, which is justified, for example, when files with critical contents are being transmitted. 

Because it might not be immediately obvious, we emphasize that implementing \gls{arq} at the \gls{llc} does not change the \gls{llc} throughput seen by the upper layers.  In other words, retransmitting the NACKed packets does not degrade the throughput because  this criterion is blind to which bits are actually being transmitted -- the fact that the same bits are retransmitted  is irrelevant in the throughput evaluation.  
In fact, the throughput remains the same independently of the number of allowed \gls{arq} rounds at \gls{llc}, \ie truncated \gls{arq} does not change the throughput neither.\footnote{In practice, the addition delay induced by the retransmissions may be a source of the throughput degradation. For example, the TCP may interpret the delay as a congestion and respond by decreasing the size of the transmitted packets, which lowers the throughput. 
}

This is formally stated in \cite{Jabi16}, and it holds in absence of other system-related considerations such as the communication overhead in the feedback channel, or the buffer overflow probability, which we ignore here. In our case, assuming an unlimited number of \gls{arq} rounds leads to the lossless communication which is particularly useful from the theoretical point of view, as it allows us to make a fair comparison between various \gls{phy} strategies using a single criterion -- the throughput.


\subsection{Decoding Errors: AMC}\label{Sec:Errors.AMC}

The probability of error in the first transmission round, $\err_1$, depends on the experienced \gls{snr} and the adopted rate $R_l$, \ie $\PR{\err_1|R_l,\SNR}=\PER_l(\SNR)$ and  it might be established experimentally for a given encoding and decoding. However, for the numerical analysis, it is convenient to use the parametric description of the \gls{per} function
\begin{align}
\label{PER.SNR}
\PER_l(\SNR)&=
\begin{cases}
1 &\text{if}\quad \SNR<\SNR_{\tr{th},l}\\
\exp[-\tilde{a}_l (\SNR/\SNR_{\tr{th},l}-1) ] &\text{if}\quad \SNR\geq\SNR_{\tr{th},l}
\end{cases},
\end{align}
where the thresholds $\SNR_{\tr{th},l}$ and the decay parameters $\tilde{a}_l$ should be found from the empirical/measured data. A form similar to \eqref{PER.SNR} has been also used in \cite{Liu04}, which provided tabulated values of $\exp(\tilde{a}_l)$, $\SNR_{\tr{th},l}$, and $\tilde{a}_l/\SNR_{\tr{th},l}$ for a particular class of encoders/modulators and decoders. 

We found that, a common decay value $\tilde{a}_l=\tilde{a}$ allows for a compact description of the \gls{snr}-\gls{per} relationship for the same family of encoders/decoders: for convolutional codes, $\tilde{a}\approx 4$ fits well the experimental data, while $\tilde{a}\approx 15$ should be used for turbo-coded transmissions with large codewords.\footnote{\cite[Table I]{Liu04} shows the values $a_l=\exp(\tilde{a}_l)$, \ie $\tilde{a}_l=\log{a}_l\in\set{5.6, 4.4, 4.2, 3.9, 4.0, 3.5}$. It is not clear how these $a_l$ were obtained, but the \gls{per} curves in \cite[Fig.~10]{Liu04} may be approximated using a common value $\tilde{a}_l=4$.} Here, instead of matching the value of $\tilde{a}$ to a particular family of \gls{per} curves as done in \cite{Liu04}, we treat $\tilde{a}$ as a parameter which defines how quickly the \gls{per} curve decays as a function of \gls{snr}. Its operational meaning may be defined from \eqref{PER.SNR} in terms of \gls{snr}, $\Delta\cd\SNR_{\tr{th},l}$, necessary to achieve the desired level of the \gls{per}, $\PER_l(\Delta\cd\SNR_{\tr{th},l})=\epsilon$, as
\begin{align}\label{Delta.epsilon}
\Delta={\ln(1/\epsilon)}/{\tilde{a}}+1.
\end{align}
For example, setting $\epsilon=10^{-2}$, we obtain 
$\Delta=3.3\dBval$ for $\tilde{a}=4$, and $\Delta=10\dBval$ for $\tilde{a}=0.5$.

We also use $I(\SNR_{\tr{th},l})=\R_l$, where $I(\SNR)=\mf{I}(S;Y)$ is the mutual information between the random variables $S$ and $Y$ modeling the transmitted and the received signals $\bs[n]$ and $\by[n]$. This coincides with the threshold decoding of the capacity achieving codes which can be modelled with $\tilde{a}=\infty$; it is convenient because, the results are then comparable to those already presented in the literature for \gls{amc} \cite{Kim07} or \gls{harq} \cite{Caire01}.

\subsection{Decoding Errors: HARQ}\label{Sec:Errors}

Let $\nack_k=\set{\err_1,\ld, \err_k}$ be the event of $k$ consecutive decoding errors. Each error event $\err_t$, $t=1,\ld, k$, depends on the \gls{snr}s experienced in $t$ rounds and on the transmission rate $\R=\R_l$ adopted in the first round. As in \secref{Sec:Errors.AMC}, the probability of this event is given by the \gls{per} function $\PR{\err_t|\R_l}=\PER_l(\SNR_1, \ld,\SNR_t)$ which now depends on $t$ different \gls{snr}s. However, a multidimensional representation of the \gls{per} is not tractable, and we adopt the simplifying approach of \cite{Wan06, Pauli2007} which reduces the multidimensional function to the following  scalar representation
\begin{align}\label{MD.2.1D}
\PER_l(\SNR_1,\ld,\SNR_k)\approx\PER_l(\SNR^\Sigma_k),
\end{align}
where  the ``aggregate'' \gls{snr} obtained as
\begin{align}\label{aggreg.SNR}
\SNR^\Sigma_k=h^{-1}\Big( \sum_{t=1}^k h(\SNR_t)\Big);
\end{align}
the function $h(\cd)$ and its inverse $h^{-1}(\cd)$ depend on the encoding; \ie on the type of \gls{harq}. 

For \gls{rrharq}, each codeword is the same, \ie $\bx_1=\ld=\bx_K$, and the receiver applies a \gls{mrc}\footnote{\gls{rrharq} is also known as ``Chase combining'' \gls{harq}.} to combine the received signals $\by_1,\ld,\by_k$. We thus can use $h(x)=x$, and then $\SNR^\Sigma_k=\sum_{t=1}^k\SNR_k$ is the ``accumulated'' \gls{snr} \cite{Caire01}. In this case, \eqref{MD.2.1D} is exact, and the approximation sign may be removed.

In the case of \gls{irharq}, codewords $\bx_k$, $k=1,\ld, K$, are obtained by taking non-overlapping elements of  the ``mother'' codeword $\bx_\tr{o}=[\bx_1,\ld,\bx_K]$.\footnote{Codewords $\bx_k$ are \emph{punctured} versions of $\bx_\tr{o}$; for example, $\bx_k$ may consist of new parity symbols/bits in each \gls{harq} round.} As in \cite{Wan06, Park15,Alvarado15}, we will use $h(x)=I(x)$, which is concave so we guarantee that in the case of \gls{irharq} $\SNR^\Sigma_k>\sum_{t=1}^k\SNR_k$. That is, as expected, \gls{irharq} yields smaller probabilities of error compared to \gls{rrharq}. For simplicity, we use $I(\SNR)=\log_2(1+\SNR)$; although this corresponds to a Gaussian model for $S$, it is done merely to simplify the modelling. 

The final step requires finding the probability of the event $\nack_k$, and we use here the backward error implication assumption, $\err_k \Rightarrow \err_{k-1}\Rightarrow\ld\Rightarrow\err_1$, already proposed in \cite{Long09, Gu06}, which yields 
\begin{align}
\PR{\nack_k}&=\PR{\err_1,\ld,\err_k}
\label{NACK.ERR}
=\PR{\err_k}\approx\PER_k(\SNR_k^\Sigma).
\end{align}
We quickly note that, although formally $\PR{\nack_k}\leq\PR{\err_k}$, the approximation \eqref{NACK.ERR} is very accurate and should be preferred to the very imprecise $\PR{\nack_k}=\prod_{l=1}^k\PR{\err_l}$, which also appeared in the literature, \eg \cite{Zheng05, Lagrange10}.

\subsection{Relation to Previous Works}\label{Sec:state.of.art}

The interaction between the \gls{amc} and \gls{harq} has raised a considerable interest in the literature. For instance, the throughput of \gls{harq} with the \gls{amc} was analyzed in \cite{Liu04,Zheng05,Kim08b,Yu10,Ramis11,Zhang13}, and their delay in \cite{Le06,Wang07,Le07,Ishizaki07,Femenias09}. However, the available results do not allow us to draw clear-cut conclusions, mainly because they are based on different assumptions very often connected to a particular coding or channel models. 

The difficulty is to strike a balance between the simplicity of the analysis and the generality of the conclusions. To address this challenge, we use arguably the simplest non-trivial decoding error model, and consider extreme assumptions with regard to the channel model (slow and fast fading), while all the remaining cases (\eg correlated channels) are expected to yield intermediate results. Furthermore, we initially make no efforts to optimize the operation of \gls{harq}, as done for example in \cite{Ramis11}, providing thus a ``canonical" model for an interaction between the \gls{amc} and \gls{harq}.

However, the main difference with the previous works is that we adopt the throughput at \gls{phy} as the unique performance criterion, allowing the \gls{llc} to handle the residual errors of \gls{harq} or the \gls{amc}, at the cost of buffering and delay for the individual packets.

The conclusions (and the analysis) change when packet loss or delay are taken into consideration. For example, the authors of \cite{Liu04} analyze the \gls{llc} with \emph{truncated} the \gls{arq} operating on top of the \gls{amc}: under constraints on the probability of packet loss at \gls{llc}, increasing the number of the \gls{arq} rounds allows the \gls{amc} to select rather aggressively the rates providing hence a higher throughput. In our perspective, the loss is irrelevant and the highest throughput is attained by optimizing the \gls{amc} without any constraints, letting the \gls{llc} to deal with the errors.

\section{AMC}\label{Sec:AMC}

The throughput is a the long-term average number of correctly received bits per transmitted symbol. Since the errors in the \gls{amc} are block-wise memoryless, the throughput is unaffected by the fading type (slow or fast) and we can take expectation for each block
\begin{align}
\eta^\tnr{amc}
&\triangleq \Ex_{\SNRrv}\Big[\R_{\hat{l}}\big(1-\PER_{\hat{l}}(\SNRrv)\big)\Big]\label{eta.amc.2} \\
&=\sum_{l=1}^L \int_{\SNR\in\mcD_{l}}\pdf_{\SNRrv}(x) \eta_l(x) \dd x,
\end{align}
where we used \eqref{hat.l} to obtain \eqref{eta.amc.2}, with  
\begin{align}\label{eta.l.amc}
\eta_l(\SNR)\triangleq\R_l(1-\PER_l(\SNR)) 
\end{align}
being the ``instantaneous'' throughput defined for an \gls{snr}, $\SNR$. 

The optimal \gls{snr} decision region is hence given by 
\begin{align}\label{mcD.l}
\mcD^{\amc}_l=\set{\SNR: \eta_l(\SNR) \ge \eta_k(\SNR), \quad \forall k\neq l}.
\end{align}

\begin{proposition}\label{Prop:Intersection}
If the \gls{per} function is defined by \eqref{PER.SNR}, the optimal decision regions are intervals $\mcD^{\amc}_l=[\gamma_l, \gamma_{l+1})$.
\begin{proof}
It is enough to show that there exists $\SNR_\tnr{o}\ge\SNR_{\tr{th},l+1}$ such that $\eta_{l+1}(\SNR)<\eta_{l}(\SNR)$ for $\SNR\leq\SNR_\tnr{o}$, and $\eta'_{l+1}(\SNR)>\eta'_{l}(\SNR)$ for $\SNR\geq\SNR_\tnr{o}$. Then, it can be easily shown with basic algebra that there is only one value of $\SNR$ solving $\eta_{l+1}(\SNR)=\eta_{l}(\SNR)$. 
\end{proof}
\end{proposition}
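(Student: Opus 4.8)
The plan is to reduce the claim to a single-crossing property between consecutive instantaneous-throughput curves and then assemble these crossings into the interval partition. First I would record the ordering that underlies everything: since $I(\cd)$ is strictly increasing and $I(\SNR_{\tr{th},l})=\R_l$, ordering the rates as $\R_1<\cdots<\R_L$ forces $\SNR_{\tr{th},1}<\cdots<\SNR_{\tr{th},L}$. With this, for $\SNR\le\SNR_{\tr{th},l+1}$ one has $\eta_{l+1}(\SNR)=0\le\eta_l(\SNR)$, so the only place where $\eta_{l+1}$ can overtake $\eta_l$ is on $[\SNR_{\tr{th},l+1},\infty)$. The proposition then follows once I show that on this interval the difference $g(\SNR)\triangleq\eta_{l+1}(\SNR)-\eta_l(\SNR)$ has exactly one zero $\gamma_{l+1}$, negative below it and positive above it; the region $\mcD^{\amc}_l=[\gamma_l,\gamma_{l+1})$ is then the set of $\SNR$ on which $\eta_l$ is the upper envelope.

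For the single crossing I would study $g$ through its endpoints and its derivative. At the left endpoint $g(\SNR_{\tr{th},l+1})=-\eta_l(\SNR_{\tr{th},l+1})<0$ (strictly, since $\SNR_{\tr{th},l+1}>\SNR_{\tr{th},l}$), while $g(\SNR)\to\R_{l+1}-\R_l>0$ as $\SNR\to\infty$; continuity already gives at least one crossing. The crux is uniqueness, for which I would differentiate \eqref{eta.l.amc}: for $\SNR\ge\SNR_{\tr{th},l}$ one gets $\eta'_l(\SNR)=(\R_l\tilde a/\SNR_{\tr{th},l})\PER_l(\SNR)>0$, so both curves are increasing and the sign of $g'$ is governed entirely by the ratio $\eta'_{l+1}(\SNR)/\eta'_l(\SNR)$.

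The key computation is that this ratio equals a constant times $\exp[\tilde a\,\SNR(1/\SNR_{\tr{th},l}-1/\SNR_{\tr{th},l+1})]$, and since $1/\SNR_{\tr{th},l}-1/\SNR_{\tr{th},l+1}>0$ it is \emph{strictly increasing} in $\SNR$ and diverges. Hence there is a unique $\SNR_\tnr{o}$ at which the ratio equals $1$: for $\SNR<\SNR_\tnr{o}$ we have $g'<0$ and for $\SNR>\SNR_\tnr{o}$ we have $g'>0$ (if the ratio already exceeds $1$ at $\SNR_{\tr{th},l+1}$, take $\SNR_\tnr{o}=\SNR_{\tr{th},l+1}$). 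Because $g$ starts negative and is decreasing on $[\SNR_{\tr{th},l+1},\SNR_\tnr{o}]$ it stays negative there, while on $[\SNR_\tnr{o},\infty)$ it is strictly increasing from $g(\SNR_\tnr{o})<0$ to a positive limit; this yields exactly one zero $\gamma_{l+1}>\SNR_\tnr{o}$ with the desired signs on either side. These are precisely the two conditions asserted in the statement's proof outline.

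The step I expect to be the main obstacle is the passage from the pairwise single-crossing to the global interval partition: a priori $\eta_l$ must beat \emph{every} other curve on $\mcD^{\amc}_l$, not just its neighbours, and the boundary points must be correctly ordered ($\gamma_l<\gamma_{l+1}$). Fortunately the ratio argument above is not special to consecutive indices—it applies verbatim to any pair $l<m$, since it uses only $\R_l<\R_m$ and $\SNR_{\tr{th},l}<\SNR_{\tr{th},m}$—so every pair of curves crosses exactly once, with the higher-rate curve dominating for large $\SNR$. This ``total'' single-crossing structure makes the argmax of the upper envelope nondecreasing in $\SNR$, which is exactly the statement that each $\mcD^{\amc}_l$ is an interval. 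I would also flag that the monotonicity of the ratio relies on the common-decay assumption $\tilde a_l=\tilde a$; for distinct $\tilde a_l$ the same proof goes through provided $\tilde a_l/\SNR_{\tr{th},l}$ is decreasing in $l$, i.e. the lower-rate curves decay faster.
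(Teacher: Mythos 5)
Your proof is correct and follows essentially the same route as the paper's sketch: you exhibit the point $\SNR_\tnr{o}$ (where the derivative ratio crosses $1$) below which $\eta_{l+1}<\eta_l$ and above which $\eta'_{l+1}>\eta'_l$, which is exactly the sufficient condition the paper invokes, and you supply the ``basic algebra'' it leaves implicit. Your additional treatment of non-consecutive pairs (to justify that the upper envelope's argmax is nondecreasing) and the remark on when the argument survives distinct decay parameters $\tilde{a}_l$ are welcome details the paper glosses over, but they do not change the underlying approach.
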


Indeed, in most of the practically interesting cases, the decision regions of \gls{amc} are intervals \cite{Alouini00} with boundaries $\gamma_l$ defined by the intersection of $\eta_l(\SNR)$ and $\eta_{l-1}(\SNR)$
\begin{align}\label{kkt.eta}
&R_l\big(1-\PER_l(\gamma_l ) \big)
=R_{l-1}\big(1-\PER_{l-1}(\gamma_l ) \big),
\end{align}
where also, for notational convenience, we use $\gamma_0\triangleq0$ and $\gamma_{L+1}=\infty$.

This expression may be further simplified assuming that the probability of decoding error when transmitting with rate $R_{l-1}$ is very small at the right border of $\mcD^{\amc}_{l-1}$.\footnote{It is, indeed, the case if $\gamma_{l-1}\ll\gamma_l$ and the \gls{per} function decays quickly with $\SNR$ as per \eqref{PER.SNR}.} Then, \eqref{kkt.eta} becomes
\begin{align}\label{WEP.R.R}
\PER_{l} (\gamma_l)\approx 1-\frac{R_{l-1}}{R_l}, \quad l=2,\ld,L.
\end{align}

Here, we see that if $R_l$ is much larger than $R_{l-1}$, the nominal error rate at the interval border $\gamma_l$ may be quite high. For example, if $R_{l-1}=1$ and $R_{l}=2$, we obtain $\PER_{l} (\gamma_l)=0.5$. This result may be  contrasted with the ``hard'' limits imposed on the \gls{per}, $\PER_{l} (\gamma_l)=\PER_\tnr{t}$, suggested in~\cite{Liu04} and specified by the \gls{lte} as $\PER_\tnr{t}=10^{-1}$ \cite{3GPP_TS_36.213}.

Finally, we express the throughput compactly as follows:
\begin{align}\label{eta.amc}
\eta^\amc&=\sum_{l=1}^L R_l \cd (1-f_{1,l})p_l,
\end{align}
where 
\begin{align}\label{}
f_{1,l}&\triangleq \frac{1}{p_{l}}\int_{\gamma_l}^{\gamma_{l+1}} \pdf_{\SNRrv}(x) \PER_l(x)  \dd x,\quad
\label{p.l}
p_{l}\triangleq \int_{\gamma_l}^{\gamma_{l+1}} \pdf_{\SNRrv}(x) \dd x.
\end{align}

Using \eqref{WEP.R.R} in \eqref{Delta.epsilon}, we can calculate the border of the decision region as
\begin{align}
\label{gamma.l.Delta}
\gamma_l &=\SNR_{\tnr{th},l}\Big(1+ \frac{1}{\tilde{a}}\ln \frac{R_l}{\R_l-\R_{l-1}}\Big).
\end{align}
Therefore, not surprisingly, an increased strength of the coding (large $\tilde{a}$) moves the borders of the decision region closer to the decoding threshold, while a weak coding will result in larger $\gamma_l$; this, via \eqref{p.l}, will also move the throughput curve to the regions of higher \gls{snr}.

\subsection{Discussion}
We note that, according to the observations in \secref{Sec:LLC}, the above results hold also when \gls{arq} is used on top of the \gls{amc}. Since a similar setup is considered in \cite{Liu04}, it is instructive to contrast both results. 

The main difference is that, according to our results, the throughput-optimal intervals defined by $\gamma_l$ do not change with $M$ the number of \gls{arq} rounds. This is exactly the strength of the throughput criterion which does not change when \gls{arq} is added on top of a particular \gls{phy} transmission strategy. On the other hand, \cite{Liu04} imposed limits on the packet loss $P_{\tnr{loss}}$, and the \gls{snr} thresholds $\check{\gamma}_l$ were adjusted by solving the equation $[\PER_l(\check{\gamma}_l)]^M=P_{\tnr{loss}}$, or 
\begin{align}\label{gamma.check}
\PER_l(\check{\gamma}_l)=P_{\tnr{t}}, 
\end{align}
where  $P_\tnr{t}=P^{1/M}_\tnr{loss}$ is the target \gls{per} at each interval boundary $\check{\gamma}_l$. Hence, for a small $P_\tnr{loss}$, with a growing $M$ it is possible to increase $P_\tnr{t}$ and then $\check{\gamma}_l$ decreases approaching the throughput-optimal value $\gamma_l$; this effect was also observed in \cite{Liu04}. However, a caution is required because $\check{\gamma}_l$ should never be smaller than $\gamma_l$, which may happen using \eqref{gamma.check} for large $M$.

\begin{example}\label{Ex:AMC}
To illustrate the importance of the decision regions, we show in \figref{Fig:th_comp_AMC} the throughput of \gls{amc} for the (optimal) decision intervals based on \eqref{WEP.R.R} (as they are practically the same as with \eqref{kkt.eta}), and based on \eqref{gamma.check}, for both cases $P_\tnr{t}=10^{-1}$ and $P_\tnr{t}=10^{-2}$.

We use $L=5$ rates from the set $\mcR=\set{R_l}_{l=1}^L$, where $R_l=l\cd0.75$. From \eqref{WEP.R.R}, we find that $\PER(\gamma_l)\approx 0.5, 0.33, 0.25, 0.2$. To get an insight into the importance of the coding strength, we use $\tilde{a}\in\set{4, 0.5}$, where the weaker code throughput curve is right-shifted by $\sim6\tr{dB}$ due to the increased borders $\gamma_l$ of the decision regions, cf.~\eqref{gamma.l.Delta}. 

Quite clearly, the strict constraints on the packet loss (small $P_\tnr{t}$) move the \gls{snr} boundaries $\check{\gamma}_l$ to the right; this decreases the throughput and is particularly notable for weak codes (here, $\tilde{a}=0.5$). Adding more \gls{arq} rounds (increasing $M$) increases $P_\tnr{t}$ and moves $\gamma_l$ to the left (\cf \eqref{gamma.check}), hence improving the throughput which approaches the optimal solution derived in~\eqref{WEP.R.R}.

\begin{figure}[bt]
\psfrag{xlabel}[ct][ct][\siz]{$\SNRav$~[dB] }
\psfrag{ylabel}[ct][ct][\siz]{$\eta^\tnr{amc}$}
\psfrag{OPT}[lc][lc][\siz]{Eq.~\eqref{WEP.R.R}}
\psfrag{Ploss1XX}[lc][lc][\siz]{$P_\tnr{t}=10^{-1}$}
\psfrag{Ploss2XX}[lc][lc][\siz]{$P_\tnr{t}=10^{-2}$}
\psfrag{G=0.5}[lc][lc][\siz]{$\tilde{a}=0.5$}
\psfrag{G=1}[lc][lc][\siz]{$\tilde{a}=1$}
\psfrag{G=4}[lc][lc][\siz]{$\tilde{a}=4$}

\begin{center}
\scalebox{\sizf}{\includegraphics[width=\linewidth]{}}
\caption{The throughput $\eta^\tnr{amc}$ with the decision intervals based on \eqref{WEP.R.R} and \eqref{gamma.check}, for $P_\tnr{t}=10^{-1}$ and $P_\tnr{t}=10^{-2}$.}\label{Fig:th_comp_AMC}
\end{center}
\end{figure}

\end{example}

\section{HARQ: Slow Fading}\label{Sec:slow.fading}
In the case of slow-fading channels, we may assume that the \gls{snr}, $\SNR$, remains constant during many blocks $\bs[n]$, which we treat together as a ``super channel-block''.  We may then apply the same approach we already used to analyze \gls{amc}, and calculate the throughput as 
\begin{align}\label{eta.harq.slow}
\eta^\harq_{K}&=\sum_{l=1}^L \int_{x \in\mcD_{l}}\pdf_{\SNRrv}(x) \eta^{\harq}_{K,l}(x) \dd x,
\end{align}
where again $\mcD_{l}$ is the \gls{snr} decision region in which we use the transmission rate $\R_l$, and $ \eta^{\harq}_{K,l}(x)$ is the throughput of \gls{harq} carried out with the rate $\R_l$ over \gls{snr} $x$. The main questions are: how to calculate $\eta^{\harq}_{K,l}(x)$, and how to find the optimal decision regions $\mcD_l$?

As to the throughput, $\eta^{\harq}_{K,l}(x)$, the number of channel blocks used to transmit the information packet is now variable due to the decoding errors whose probability is captured by the \gls{per} function in~\eqref{PER.SNR}. Modeling \gls{harq} rounds by the states of a Markov process, the beginning of the \gls{harq} cycle corresponds to the renewal of the process. Thus, we can use the renewal-reward theorem to obtain \cite{Zorzi96,Caire01,Zheng05},\cite[Ch.~2.3]{Wolff89_Book}
\begin{align}\label{eta.renew.reward.slow}
\eta^{\harq}_{K,l}(\SNR)=\frac{\Ex[\mfR_l]}{\Ex[\mfT_l]},
\end{align}
where $\mfR_l\in\set{\R_l,0}$ is a random ``reward'' (here, the number of successfully delivered bits normalized by the number of symbols, $\Ns$) at the end of the \gls{harq} cycle, and $\mfT_l\in\set{1,\ld,K}$ is a random inter-renewal time, \ie the number of \gls{harq} rounds.  

The expectations in \eqref{eta.renew.reward.slow} are taken with respect to the decoding errors $\err_1, \ld, \err_K$ which, conditioned on  $\SNRrv=\SNR$, are independent from one \gls{harq} cycle to another. Thus, the reward $\mfR_l$ and the time $\mfT_l$ are also independent between cycles. This is, in fact, the necessary condition to use the renewal-reward theorem~\eqref{eta.renew.reward.slow} \cite[Ch.~2.3]{Wolff89_Book}.

On the other hand, $\mfR_l$ and the time $\mfT_l$ are not \emph{unconditionally} independent, because they depend on the same realization of the \gls{snr}. For this reason, the expectation cannot be taken with respect to $\SNRrv$ as it would imply that the \gls{snr} is independent between the \gls{harq} cycles; while this approach may be suitable in bursty communication scenarios~\cite[Sec.~IV.B]{Shen09}\cite[Sec.~IV]{Makki14}, it is inappropriate in our case.\footnote{A direct consequence of bursty communications assumption is that the transmissions over ``better'' channels require shorter times than those made over ``worse'' channels. This creates a coupling between the channel model and the transmission protocol \cite{Makki14} which is undesirable in our communication model.}

The throughput can be hence calculated as \cite{Caire01,Zheng05}
\begin{align}\label{eta.harq.slow.l}
\eta^{\harq}_{K,l}(\SNR)=\frac{\R_l(1-f_{K,l}(\SNR))}{1+\sum_{k=1}^{K-1} f_{k,l}(\SNR)},
\end{align}
where
\begin{align}\label{}
f_{k,l}(\SNR)=\PR{\err_1,\ld,\err_k| \R_l}
\end{align}
is the probability of $k$ consecutive decoding errors conditioned on having the first \gls{harq} round carried out with the rate $\R_l$. These are calculated using the approach explained in~\eqref{NACK.ERR}
\begin{align}\label{}
f_{k,l}(\SNR)\approx\PER_l (\SNR_k^\Sigma),
\end{align}
where $\SNR_k^\Sigma$ is obtained from \eqref{aggreg.SNR}, and thus depends on the type of \gls{harq}. In particular, for \gls{rrharq} we have $\SNR_k^\Sigma=k\SNR$, while for \gls{irharq} it is obtained through $I(\SNR_k^\Sigma)=kI(\SNR)$, where $I(x)=\log_2(1+x)$ (see also \secref{Sec:Errors}). For notational convenience, we use $f_{0,l}(\SNR)\triangleq1$.

Before progressing any further, we first address the following simple question: does the throughput of \gls{harq} increase with then number of allowed transmissions, $\kmax$?

\begin{proposition}\label{Thm22}
If 
\begin{align}\label{fk.cond.sup}
 \frac{f_{k+1,l}(\SNR)}{f_{k,l}(\SNR)} \leq \frac{f_{k,l}(\SNR)}{f_{k-1,l}(\SNR)}, \quad \forall k\ge 1
\end{align}
then, for $\eta^{\harq}_{K,l}(\SNR)$ given by \eqref{eta.harq.slow.l}, the following holds: $\eta^{\harq}_{K,l}(\SNR)\leq\eta^{\harq}_{K+1,l}(\SNR)$. Moreover, if there exists $k$ for which the inequality in~\eqref{fk.cond.sup} is strict, then we obtain a strict inequality $\eta^{\harq}_{K,l}(\SNR)<\eta^{\harq}_{K+1,l}(\SNR)$.
\begin{proof}

Our objective is to derive a sufficient condition, under which the expression
\begin{align}\label{}
\eta_K=\frac{R(1-f_{K})}{1+\sum_{k=1}^{K-1} f_{k}}
\end{align}
is increasing with $K$, where the dependence on $l$ and $\SNR$ is removed from \eqref{eta.harq.slow.l} for notational brevity. That is we require
\begin{align}\label{eta.harq.appendix}
\eta_{K}\leq\eta_{K+1}&\implies
\frac{R\big(1-f_{K}\big)}{1+\sum_{k=1}^{K-1} f_{k}} \le\frac{R\big(1-f_{K+1}\big)}{1+\sum_{k=1}^{K} f_{k}}\\
&\implies f_{K+1}\big(1 + \sum_{k=1}^{K-1}  f_{k}\big) \le f_{K}\sum_{k=1}^{K}  f_{k}\implies 
\label{final.cond}
\sum_{k=1}^{K} f_{K+1} f_{k-1} \le \sum_{k=1}^{K} f_{K} f_{k}.
\end{align}

Obviously, \eqref{final.cond} is a necessary and sufficient condition for the throughput to be increasing with $K$. To remove the dependence on the sum, we strengthen \eqref{final.cond} by imposing the inequality on each term yielding the sufficient (not necessary) condition
$f_{K+1} f_{k-1}\le f_{K} f_{k}\quad  \forall k\le K$, which must hold 
%
$\forall K \ge 1$; this is \eqref{fk.cond.sup}. Any strict inequality will make the inequality in \eqref{eta.harq.appendix} strict as well. This terminates the proof.
\end{proof}
\end{proposition}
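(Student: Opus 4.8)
The plan is to establish the monotonicity $\eta_K\le\eta_{K+1}$ by a direct cross-multiplication of the two throughput expressions \eqref{eta.harq.slow.l}, reducing the claim to an inequality among the $f_k$ that hypothesis \eqref{fk.cond.sup} is designed to furnish. The crucial preliminary observation is that every quantity in sight is nonnegative: each $f_k$ is a probability in $[0,1]$ and $R>0$. Hence clearing the strictly positive denominators preserves the direction of the inequality, and no sign-based case distinction is required.

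First I would write $\eta_K$ and $\eta_{K+1}$ side by side, cancel the common factor $R$, and cross-multiply. Writing $S_{K-1}\triangleq\sum_{k=1}^{K-1}f_k$ and $S_K\triangleq S_{K-1}+f_K$, the target $\eta_K\le\eta_{K+1}$ becomes
\begin{align*}
(1-f_K)(1+S_K)\le(1-f_{K+1})(1+S_{K-1}).
\end{align*}
Expanding both products, the terms $1+S_{K-1}$ cancel from the two sides, and collecting what remains leaves the equivalent inequality
\begin{align*}
f_{K+1}\Bigl(1+\textstyle\sum_{k=1}^{K-1}f_k\Bigr)\le f_K\textstyle\sum_{k=1}^{K}f_k.
\end{align*}
Because no step discarded information, this is both necessary and sufficient for the monotonicity to hold.

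The decisive move is to invoke the convention $f_0=1$ adopted above: it recasts the left factor as $1+\sum_{k=1}^{K-1}f_k=\sum_{k=1}^{K}f_{k-1}$, so the inequality turns into a comparison of two sums sharing the same index range,
\begin{align*}
\sum_{k=1}^{K}f_{K+1}f_{k-1}\le\sum_{k=1}^{K}f_{K}f_{k}.
\end{align*}
I would then strengthen this to the term-by-term bound $f_{K+1}f_{k-1}\le f_K f_k$ for every $k\le K$, which is sufficient though not necessary. Rewritten as $f_{K+1}/f_K\le f_k/f_{k-1}$, this says the consecutive ratio $\rho_k\triangleq f_k/f_{k-1}$ takes a value at index $K+1$ no larger than its value at any earlier index $k\le K$. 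But \eqref{fk.cond.sup} states precisely that $\rho_{k+1}\le\rho_k$ for all $k$, i.e. that $(\rho_k)$ is non-increasing; since $K+1>k$ for every $k\le K$, monotonicity yields $\rho_{K+1}\le\rho_k$, which is exactly the term-by-term bound. Summation then restores the displayed inequality and hence $\eta_K\le\eta_{K+1}$.

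The main obstacle is not the algebra but keeping the logical bookkeeping honest at the two places where I relax an equivalence: in passing from the summed inequality to the term-by-term one I must flag that only sufficiency is claimed, and I must make explicit that the convention $f_0=1$ is exactly what aligns the two sums so that \eqref{fk.cond.sup} applies termwise. For the strict statement, I would observe that if \eqref{fk.cond.sup} is strict at some index $k_0\le K$, then $\rho_{K+1}\le\rho_{k_0+1}<\rho_{k_0}$, so the summand at $k=k_0$ is strict; a single strict summand forces the summed inequality to be strict, and therefore $\eta_K<\eta_{K+1}$.
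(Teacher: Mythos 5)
Your proof is correct and follows essentially the same route as the paper: cross-multiply the two throughput expressions, use the convention $f_0=1$ to align the two sums, strengthen to the term-by-term bound $f_{K+1}f_{k-1}\le f_K f_k$, and recognize this as a consequence of \eqref{fk.cond.sup}. Your treatment is in fact slightly more careful than the paper's at one point -- you make explicit that the term-by-term condition for all $k\le K$ follows from \eqref{fk.cond.sup} by chaining the non-increasing ratios $\rho_{K+1}\le\rho_K\le\cdots\le\rho_k$, whereas the paper simply asserts the identification -- and your localization of the strictness to an index $k_0\le K$ is likewise the precise version of the paper's closing remark.
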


The above proposition rectifies the conditions exposed in~\cite[Sec.~IV]{Caire01}, where the ``subgeometric'' property $f_{k,l}< f_{1,l}^k$ was conjectured as a  sufficient condition for the throughput to be increasing with $K$. In fact, it is not, as we may show by constructing a subgeometric relationship and yet obtaining a non-increasing throughput with $K$. For example: the subgeometric terms $f_{2}=0.5f_{1}^2, f_{3}=0.75f_{1}^3$ yield $\eta^{\harq}_{2}>\eta^{\harq}_{3}$, which invalidates the conjecture of~\cite{Caire01}. 

The value of \propref{Thm22} is that, if the error probabilities in \gls{harq} satisfy the conditions \eqref{fk.cond.sup}, we can guarantee that \gls{harq} will improve the throughput of \gls{amc} provided the same \gls{snr} decision regions are used, \ie $\mcD_l=\mcD^{\amc}_l$. 

It is easy to show that the conditions in \propref{Thm22} are satisfied using \gls{rrharq} and \gls{irharq} under the decoding model of \secref{Sec:Errors}. 
Thus, \gls{harq} will outperform the \gls{amc} in the slow-fading channels, and the gains may be larger if the decision regions \eqref{mcD.l} take into account the reality of \gls{harq} and are redefined as follows
\begin{align}\label{mcD.l.harq}
\mcD^{\harq}_l=\set{\SNR: \eta^{\harq}_{K,l}(\SNR) \ge \eta^{\harq}_{K,k}(\SNR), \quad \forall k\neq l}.
\end{align}

\begin{figure}[bt]
\psfrag{xlabel}[ct][ct][\siz]{$\SNR$~[dB] }
\psfrag{ylabel}[ct][ct][\siz]{$\eta^{\harq}_{K,l}(\SNR)$}
\psfrag{R-1}[lc][lc][\siz]{$R_4$}
\psfrag{R-2}[lc][lc][\siz]{$R_5$}
\psfrag{etaAMC1}[rc][cc][\siz]{$\eta_4(\SNR)$~~}
\psfrag{etaAMC2}[lc][lc][\siz]{$\eta_5(\SNR)$}
\psfrag{etaHARQ1}[lc][lc][\siz]{$\eta^{\harq}_{K,5}(\SNR)$}
\psfrag{etaHARQ2}[lc][lc][\siz]{$\eta^{\harq}_{K,4}(\SNR)$}
\begin{center}
\scalebox{\sizf}{\includegraphics[width=\linewidth]{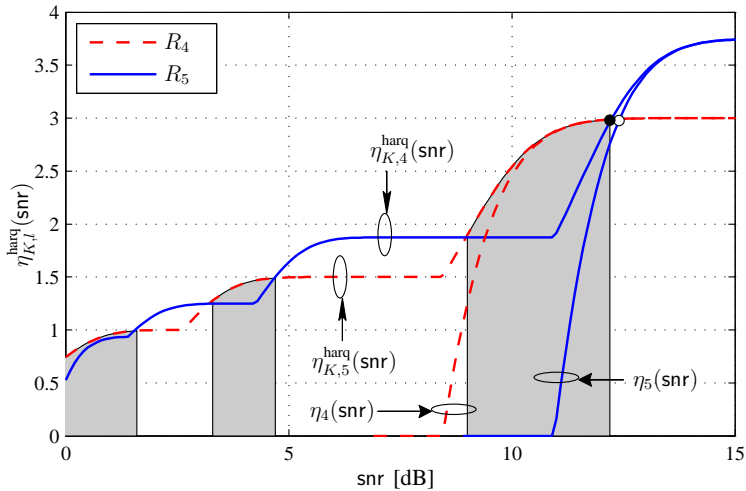}}
\caption{In slow-fading channels, decision regions $\mcD^\harq_l$ are unions of intervals. Here, $R_4=3$, $R_5=3.75$, and we assume \gls{irharq} with $K=4$. $\tilde{a}=4$.}\label{Fig:th_comp_slow}
\end{center}
\end{figure}

Here, however, we may run into a difficulty because the throughput $\eta^{\harq}_{K,l}(\SNR)$ is not a concave function of $\SNR$. Since then, the unique intersection condition we obtained in~\propref{Prop:Intersection} cannot be guaranteed, the decision regions $\mcD^{\harq}_l$, in general, are non-convex sets and should be represented by the union of intervals as shown in the following example. 

\begin{example}\label{Ex:2.eta.harq.slow}
With two rates $R_5=3.75$ and $R_4=3$ -- as in~\exref{Ex:AMC}, and \gls{irharq} with $K=4$, we can see from~\figref{Fig:th_comp_slow} that each decision region $\mcD^\harq_4$ (shaded) and $\mcD^\harq_5$ (white) is the union of intervals. The boundary of the right-most piece of $\mcD^\harq_5$ is shown with a black marker, where we can observe it being close to the boundary of $\mcD^\amc_5$ (shown with a white marker). Moreover, in most of the practical cases the rates are relatively close to each other, then if $R_l>R_{l-1}/2$ is ensured, it means that the boundary of $\mcD^\harq_l$ will be very close to the boundary of $\mcD^\amc_l$. Furthermore, if sufficiently ``dense" set of rates $\mcR$ is available, the region of the achievable throughput is well covered.
\end{example}

While, in theory,  the system may operate with arbitrary $\mcD^{\harq}_l$, it is much more convenient to use intervals as in the case of \gls{amc}. In fact, using $\mcD^\amc_l$ is not a bad idea since \propref{Thm22} confirms that the throughput will increase when using \gls{harq} on top of the \gls{amc}.

\begin{figure}[bt!]
\psfrag{xlabel}[ct][ct][\siz]{$\SNRav$ [dB] }
\psfrag{ylabel}[ct][ct][\siz]{Throughput}
\psfrag{G=0.5}[lc][lc][\siz]{$\tilde{a}=0.5$}
\psfrag{G=4}[cl][cl][\siz]{$\tilde{a}=4$}
\psfrag{AMC}[bl][bl][\siz]{AMC}
\psfrag{HARQ-RR-xxx}[bl][bl][\siz]{\gls{rrharq}, $\mcD^\harq_l$}
\psfrag{HARQ-IR-opt}[bl][bl][\siz]{\gls{irharq}, $\mcD^\harq_l$}
\psfrag{HARQ-IR-AMC}[bl][bl][\siz]{\gls{irharq}, $\mcD^\amc_l$}
\begin{center}
\scalebox{\sizf}{\includegraphics[width=\linewidth]{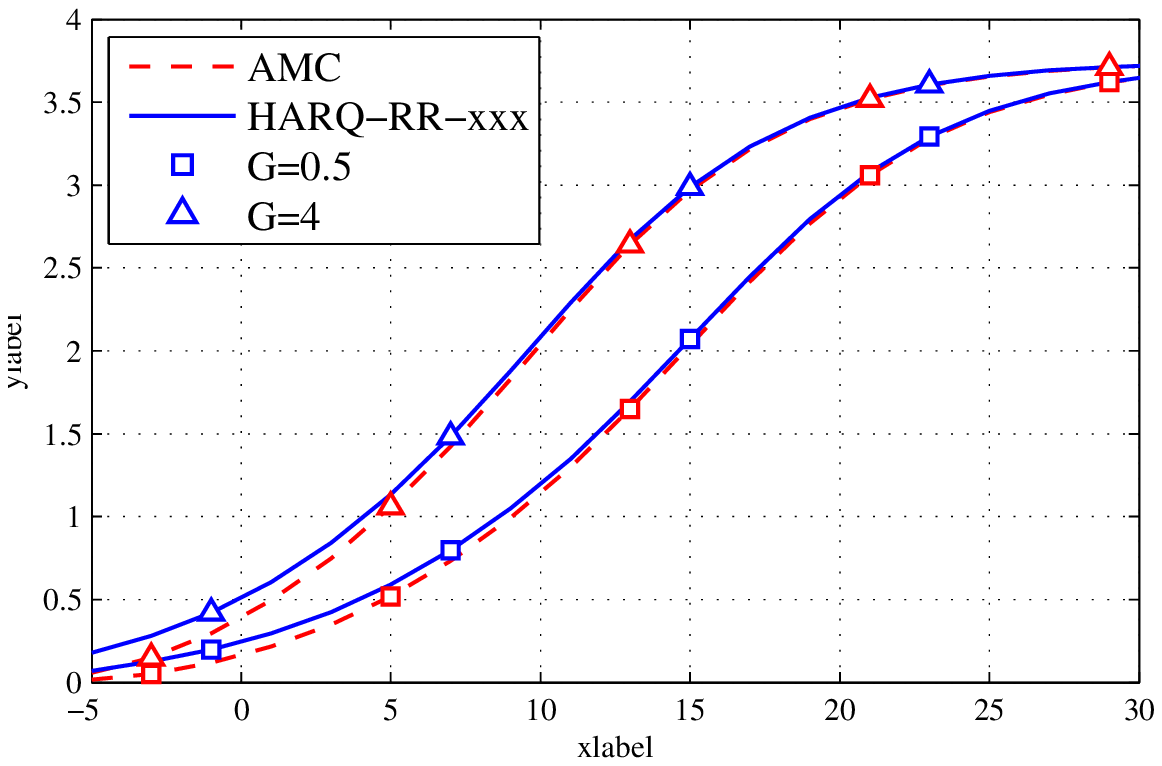}}\\
\scalebox{\siz}{a)}\\
\scalebox{\sizf}{\includegraphics[width=\linewidth]{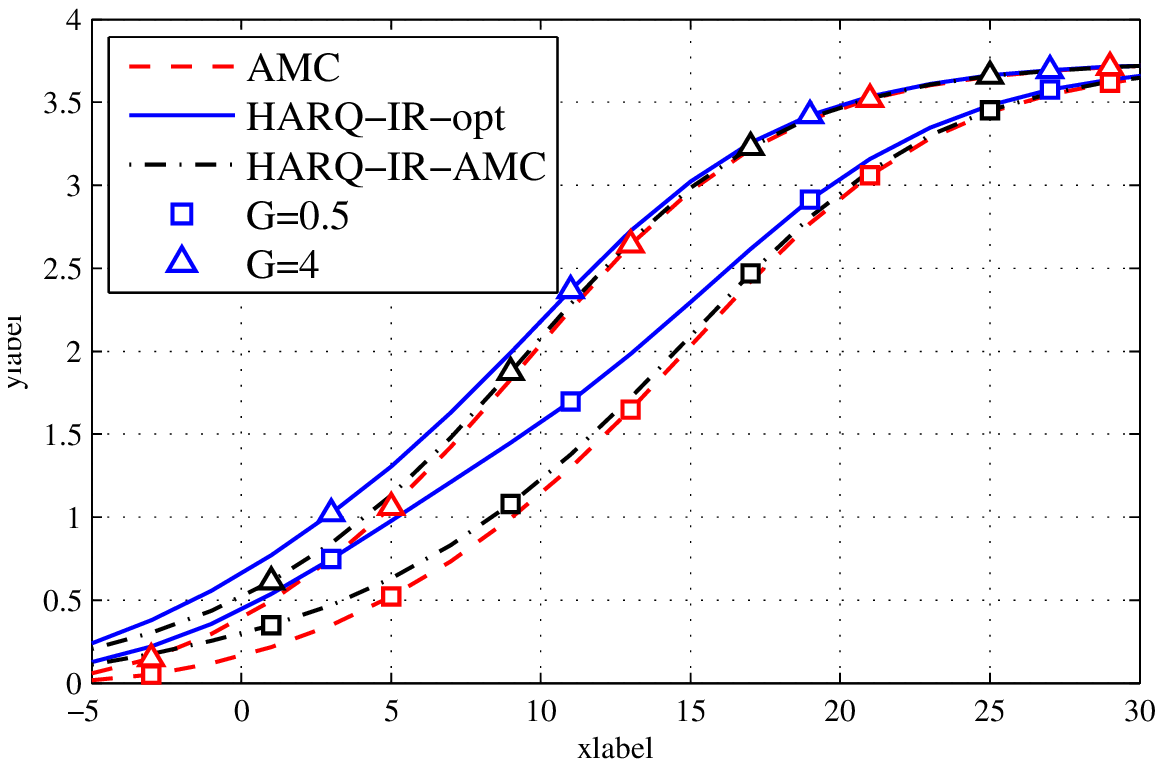}}\\
\scalebox{\siz}{b)}
\caption{Throughput of \gls{amc} and \gls{amc}-\gls{harq} for slow-fading channels using a)~\gls{rrharq}, and b)~\gls{irharq}.}\label{Fig:Slow.fading}
\end{center}
\end{figure}

\begin{example}
In~\figref{Fig:Slow.fading}, we compare the throughput obtained using the \gls{amc} and \gls{harq} with rates from the set $\mcR$ as in \exref{Ex:AMC}. Both \gls{irharq} and \gls{rrharq} are considered with $K=4$ rounds, where we also consider two cases for the decision regions for \gls{harq}, namely, $\mcD^\amc_l$ and $\mcD^\harq_l$. Two coding strengths are considered: $\tilde{a}\in\set{0.5, 4}$. As expected, the throughput of \gls{harq} is improved with respect to the \gls{amc} even if ${\mcD}^{\amc}_{l}$ is used. The gains of using the optimal decision regions $\mcD^{\harq}_{l}$ are important only for \gls{irharq}; they are notable in low \gls{snr} and are pronounced for weak codes. On the other hand, in the case of \gls{rrharq}, we obtain $\mcD^{\harq}_{l}\approx\mcD^{\amc}_{l}$ and that is why we do not show two distinct curves in this case.



\end{example}

\section{HARQ: Fast Fading}\label{Sec:Block.fading}

In the case of fast-fading channels, the throughput can again be expressed using the renewal-reward theorem \eqref{eta.renew.reward.slow}
\begin{align}\label{eta.renew.reward}
\eta^{\tnr{harq}}=\frac{\Ex[\mfR]}{\Ex[\mfT]},
\end{align}
where $\mfR$ is the random reward. Since the \gls{snr}s vary independently from one transmission round to another (and from one \gls{harq} cycle to another), the expectations in \eqref{eta.renew.reward} are taken also with respect to all the \gls{snr}s affecting the transmission (unlike in \eqref{eta.harq.slow}, where the expectation over $\SNRrv_1$ was taken outside of the fraction). Consequently, unlike~\eqref{eta.renew.reward.slow}, Since the rate is random from one \gls{harq} cykle to another, the expected reward is given by 
\begin{align}
\Ex[\mfR]&=\sum_{l=1}^L \R_l (1-f_{K,l}) p_l,
\end{align}
where $p_l$ is the probability of choosing the rate $\R_l$ in the first \gls{harq} round, given by
\begin{align}\label{}
p_l\triangleq \PR{\SNRrv \in \mcD_l},
\end{align}
with $\mcD^\harq_l$ being the decision regions we have to define.

Again, $f_{k,l}$ that denotes the probability of $k$ consecutive errors (conditioned on starting the \gls{harq} cycle with the rate $\R_l$) is given by
\begin{align}
f_{k,l}&\triangleq\PR{ \nack_k | \R_l }
\label{Pr.Error.2}
=\frac{1}{p_l}\int_{\mcD_l}
\pdf_{\SNRrv_1}(x)f_{k,l}(x)\dd x,\\
\label{f.klx}
f_{k,l}(x)
&=\Ex_{\set{\SNRrv_l}_{l=2}^k }
\big[\PER_l(x,\SNRrv_2,\ld,\SNRrv_k) \big].
\end{align}

Both \eqref{Pr.Error.2} and \eqref{f.klx} require a one-dimensional integration respectively over $\SNRrv_1\in\mcD_l$ and an aggregate \gls{snr} (merging $\SNR_1=x$ with $\SNRrv_2,\ld, \SNRrv_k$ via~\eqref{aggreg.SNR}).

Similarly, the expected number of~\gls{harq} rounds is given by 
\begin{align}\label{}
\Ex[\mfT]&=\sum_{l=1}^L \int_{\mcD_l}\pdf_{\SNRrv}(x) T_{K,l}(x) \dd x
=\sum_{l=1}^L\ov{T}_{K,l} p_l,
\end{align}
where $T_{K,l}(\SNR)=1+\sum_{k=1}^{K-1} f_{k,l}(\SNR)$ is the expected number of rounds with the first being carried out over $\SNR\in\mcD^\harq_l$, and 
$\ov{T}_{K,l}=1+\sum_{k=1}^{K-1} f_{k,l}$.

The throughput of \gls{harq} is then given by
\begin{align}\label{eta.harq}
\eta^{\tnr{harq}}_K=\frac{\sum_{l=1}^L R_l (1-f_{K,l})p_l}{\sum_{l=1}^L  \ov{T}_{K,l}p_l },
\end{align}
where both the numerator and the denominator depend on $\mcD^\harq_l$, $l=1, \ld, L$.

For the moment, we do not make any assumption about the coding, \ie we make no distinction between \gls{irharq} and \gls{rrharq}. 

\subsection{Decision Regions}
The regions $\mcD^\harq_l$ that maximize~\eqref{eta.harq} are in general non-convex and must by represented as a union of intervals. Since this would lead to a tedious optimization problem, we restrict our considerations to the \gls{amc}-like decision regions 
\begin{align}\label{}
\tilde{\mcD}^\harq_l=[\gamma_l,\gamma_{l+1}),
\end{align}
which, as we will see, become an insightful proxy to the optimal solution despite their suboptimality.

Denoting \eqref{eta.harq} as $\eta^{\tnr{harq}}_K(\boldsymbol{\gamma})$, where $\boldsymbol{\gamma}\triangleq[\gamma_1,\ld, \gamma_L]$, we solve the following optimization problem
\begin{align}\label{opt_prob}
\hat{\eta}^{\tnr{harq}}_K\triangleq &\max_{\boldsymbol{\gamma}} \eta^{\tnr{harq}}_K(\boldsymbol{\gamma})  
\quad \text{s.t.}\quad \gamma_1\le \gamma_2 \le\ld\le\gamma_L\le\gamma_{L+1},
\end{align}
where $\gamma_1=0, \gamma_{L+1}=\infty$ are adopted for notational convenience.

Since the numerator and the denominator of \eqref{eta.harq} depend on $\boldsymbol{\gamma}$, we will use the fractional programming approach \cite[Proposition~2.1]{crouzeix1991}. We define the following function
\begin{align}\label{lagrangian}
F(\boldsymbol{\gamma},\lambda)\triangleq \sum_{l=1}^L R_l &p_l\big(1-f_{K,l}\big)
-\lambda \big (\sum_{l=1}^L  \ov{T}_{K,l} p_l\big),
\end{align}
and solve the new optimization problem
 \begin{align}\label{opt_prob1}
\boldsymbol{\gamma}_{\lambda}\triangleq &\argmax_{\boldsymbol{\gamma}} F (\boldsymbol{\gamma},\lambda) 
\quad\text{s.t.} \quad 0=\gamma_1\le \gamma_2 \le\ld\le\gamma_L\le\infty.
\end{align}

Then, the optimal solution of \eqref{opt_prob} is found examining the sign of $F(\boldsymbol{\gamma},\lambda)$ due to the simple relationships \cite{crouzeix1991}
\begin{align}
 F(\boldsymbol{\gamma}_{\lambda},\lambda)>0 & \iff \lambda<\hat{\eta}^{\tnr{harq}}_K, \\
F(\boldsymbol{\gamma}_{\lambda},\lambda)<0 & \iff \lambda>\hat{\eta}^{\tnr{harq}}_K,  \\
 F(\boldsymbol{\gamma}_{\lambda},\lambda)=0 & \iff \lambda=\hat{\eta}^{\tnr{harq}}_K  .
\end{align}

Consequently, to solve \eqref{opt_prob}, it is enough to solve \eqref{opt_prob1} for each $\lambda$, and find $\lambda$ such that $F(\bx_{\lambda},\lambda)=0$. The latter can be done efficiently using simple numerical methods (a bisection search, for instance). To solve \eqref{opt_prob1}, we differentiate \eqref{lagrangian} with respect to $\gamma_l$, and obtain the following \gls{kkt} equations
\begin{align}\nonumber
R_{l-1}\big(1-\Ex_{\set{\SNRrv_l}_{l=2}^K }
\big[\PER_{l-1}(\SNR^\Sigma_K(\gamma_l)) \big]\big)
&-R_{l}\big(1-\Ex_{\set{\SNRrv_l}_{l=2}^K }
\big[\PER_l(\SNR^\Sigma_K(\gamma_l)) \big]\big)\\
\label{HARQ_opt1}
&\qquad=\lambda\big(
T_{K,l-1}(\gamma_l) -T_{K,l}(\gamma_l) 
\big),
\end{align}
that may be solved for each $\gamma_l$ under the constraints $\gamma_l\leq\gamma_{l+1}$, $l=1,\ld, L$. We note that it is possible to obtain $\gamma_l=\gamma_{l+1}$; in which case the decision region is degenerate $\tilde{\mcD}^\harq_l=\emptyset$. 

\begin{figure}[ht]
\psfrag{xlabel}[ct][ct][\siz]{$\SNRav$~[dB] }
\psfrag{ylabel}[ct][ct][\siz]{$\gamma_l$ [dB]}
\psfrag{gamma2}[cl][cl][\siz]{$\gamma_2$}
\psfrag{gamma3}[cl][cl][\siz]{$\gamma_3$}
\psfrag{gamma4}[cl][cl][\siz]{$\gamma_4$}
\psfrag{gamma5}[cl][cl][\siz]{$\gamma_5$}
\psfrag{\gls{amc}}[cl][cl][\siz]{\gls{amc}}
\psfrag{\gls{amc}-HARQ-IR}[cl][cl][\siz]{\gls{irharq}}
\psfrag{\gls{amc}-HARQ-RR}[cl][cl][\siz]{\gls{rrharq}}
\begin{center}
\scalebox{\sizf}{\includegraphics[width=\linewidth]{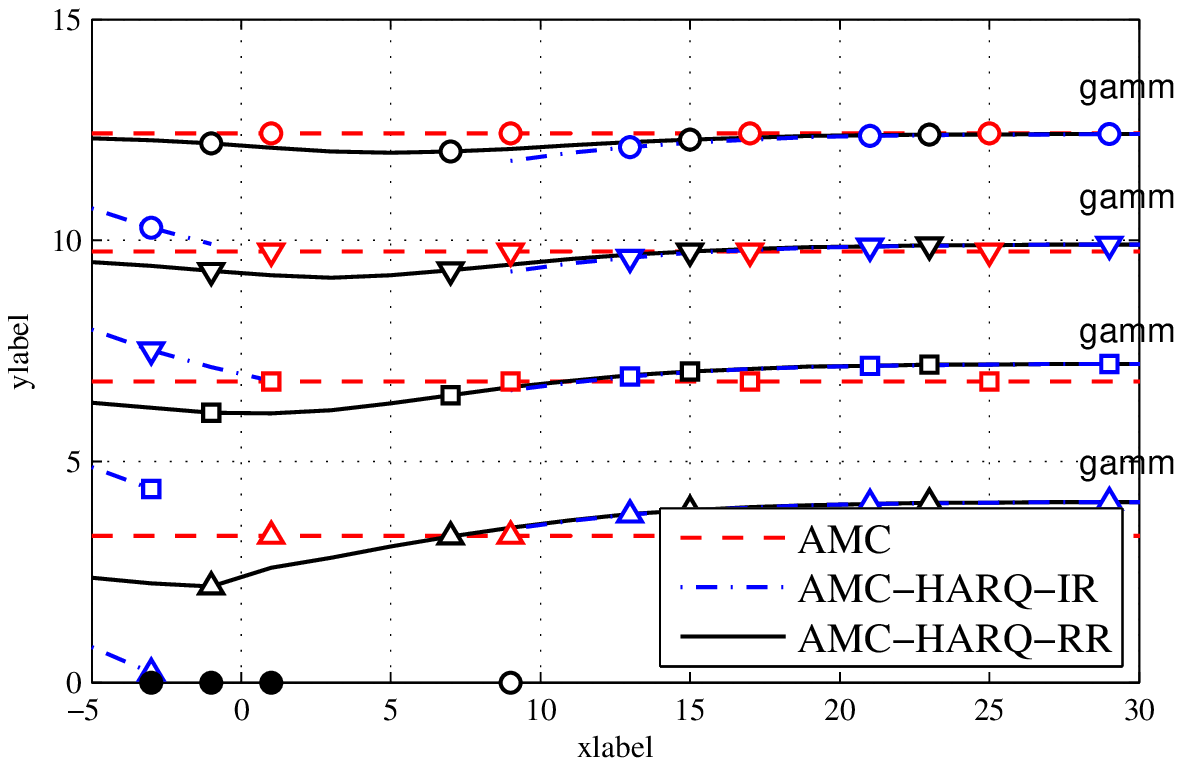}}\\
\scalebox{\siz}{a)}\\
\scalebox{\sizf}{\includegraphics[width=\linewidth]{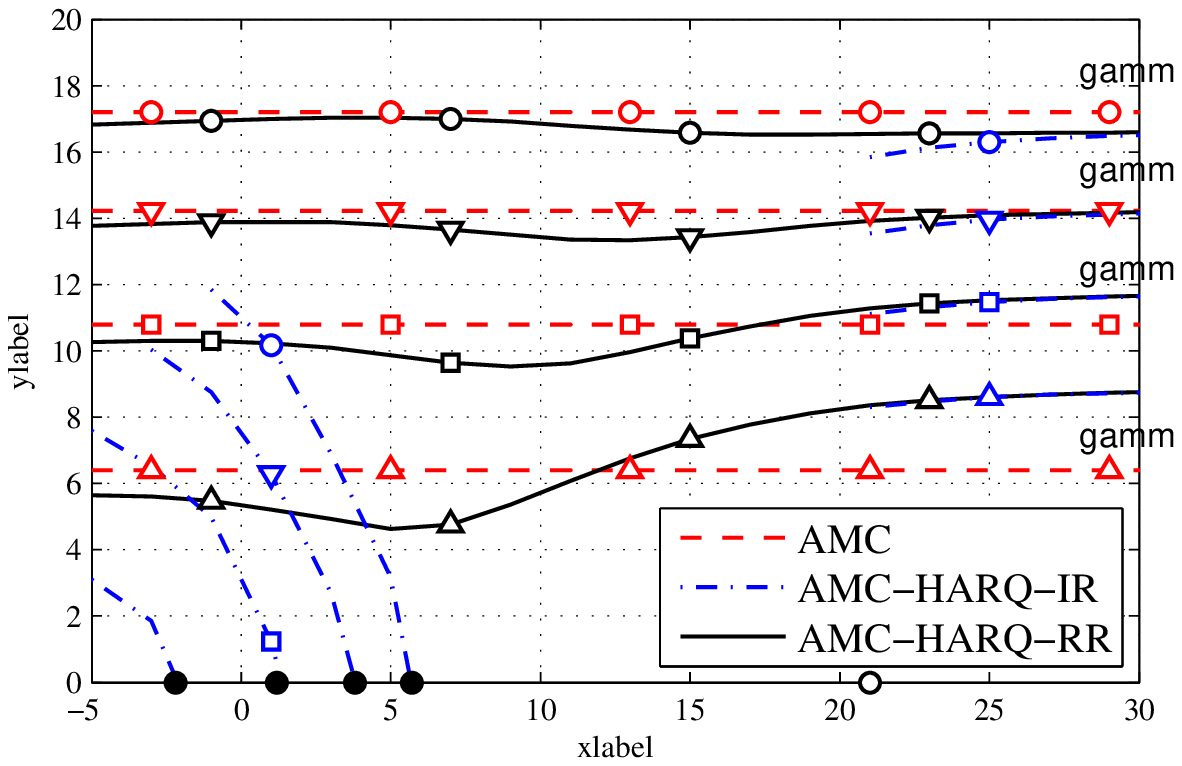}}\\
\scalebox{\siz}{b)}
\caption{Optimal thresholds $\gamma_l$ defining the decision regions $\tilde{\mcD}^\harq_l$ for fast-fading channels with a)~$\tilde{a}=4$, and b)~$\tilde{a}=0.5$. The filled circles on the $\SNRav$ axis indicate the value $\SNRav_{\tnr{low},l}$ such that if $\SNRav<\SNRav_{\tnr{low},l}$; $\tilde{\mcD}^\harq_l$ is not degenerate, \ie $\gamma_l>\gamma_{l-1}$. We note that in a)~we have $\SNRav_{\tnr{low},4}=\SNRav_{\tnr{low},5}$. The hollow circle indicates the value $\SNRav_{\tnr{high}}$ (common to all decision regions) such that if $\SNRav>\SNRav_{\tnr{high}}$; $\tilde{\mcD}^\harq_l$ is not degenerate.}\label{Fig:IID.Thresholds}
\end{center}
\end{figure}
\begin{example} \label{Ex:Thresholds}
Considering the same set of rates $\mcR$ as in previous examples, we show in \figref{Fig:IID.Thresholds} the thresholds $\set{ \gamma_2,\gamma_3,\gamma_4,\gamma_5}$  obtained after solving \eqref{opt_prob}. The optimal thresholds for the \gls{amc}, which do not depend on $\SNRav$, are shown as a reference. 

The relationship between $\gamma_l$ and $\SNRav$ is quite complex, and it provides an immediate indication on how tedious it will be to find the optimal regions $\mcD^\harq_l$ (as a union of intervals). Some intuition about the potential gains of using \gls{harq} on top of \gls{amc} may be obtained from~\figref{Fig:IID.Thresholds}a. For high (average) \gls{snr}s, the thresholds of \gls{harq} are higher compared to the \gls{amc}, this more ``conservative" choice of rates is meant to avoid errors and, hence, not to use \gls{harq}. On the other hand, for low \gls{snr}s, the thresholds of \gls{harq} are lower compared to the \gls{amc}: the rates are adopted in a more ``aggressive" way indicating that the retransmissions would be beneficial. This becomes particularly clear with medium \gls{snr}s, where the regions degenerate and only $\tilde{\mcD}_5=[0,\infty)$; this occurs with \gls{irharq} for $\SNRav\in (1~\!\tnr{dB}, 9~\!\tnr{dB})$, and this interval is indicated with the markers on the $\SNRav$ axis. 

Similar conclusions may be drawn from~\figref{Fig:IID.Thresholds}b, where the relationship is slightly more complex due to a much less steeper \gls{per} curve of the decoder.

We also observe that, for high \gls{snr}s, all decision regions are not degenerate, and we conjecture that it would be always true. We adopt the following reasoning: we know that the region $\mcD^\harq_L$ is not degenerate because, if it was, it would not be possible to attain a throughput that is close to $R_L$. Now, for all $\SNR<\SNR_{\tnr{th},L}$, transmitting with the rate $\R_L$ provokes an error, so we ask a question: is it more beneficial to avoid errors or to count on retransmissions? This can be answered comparing the rewards, $\mfR$, associated with the selections of rates (actions). The assignment $\SNR\in\mcD^\harq_{L-1}$ yields an expected immediate reward of $\mfR=\R_{L-1}(1-f_{L-1,1})$; on the other hand, if $\SNR\in\mcD^\harq_L$, the reward is at most $\mfR=\R_L/2$ because it takes at least two \gls{harq} rounds. Thus, for $\R_{L-1}\gg \R_L/2$ and a sufficiently small $f_{l,1}$, we will assign $\SNR$ to $\mcD_l$. A similar reasoning holds when $\R_{L-1}\gg \R_L/k$. We thus conclude that the region $\mcD^\harq_{L-1}$ is not degenerate. Furthermore, with the same arguments for $\SNR<\SNR_{\tnr{th},L-1}$, we can find that the regions $\mcD^\harq_{L-2}$ is not degenerate.
\end{example}

\subsection{Throughput}

\propref{Thm22} answered the important question about the value of using \gls{harq} on top of the \gls{amc} in slow-fading channels. In fast-fading channels, however, the derivations are more complicated due to the sums appearing in the numerator and the denominator in~\eqref{eta.harq}. To obtain answers in this case, we will consider separately the regimes of high- and low- average \gls{snr}s. 

From the results in \exref{Ex:Thresholds}, we have already obtained an indication about the added-value of \gls{harq}: retransmissions were more valuable in the low- than in the high-\gls{snr} regime. However, the final conclusions will be drawn by looking at the throughput.

\begin{example}\label{Ex:Throughput.fast}
The throughput obtained by adopting the decision regions shown in~\exref{Ex:Thresholds} is represented in~\figref{Fig:IID.fading}. The results may seem surprising at the first glance, but they are completely in line with the behavior of the decision regions: the advantage of \gls{harq} is well pronounced in low $\SNRav$. On the other hand, for high $\SNRav$,  adding \gls{harq} on top of the \gls{amc} is counterproductive. The throughput penalty is small but clearly observable, \eg considering $\tilde{a}=4$ and \gls{rrharq}, this effect occurs above $3~\!\tnr{dB}$, and above $9~\!\tnr{dB}$ for \gls{irharq}; these break-points are indicated with markers on the figure.

We also note that using the optimized decision regions $\tilde{\mcD}^\harq_l$ is beneficial for \gls{amc}-\gls{irharq} when the break-points move to the high-\gls{snr}s. However, this gain is moderate for strong codes ($\tilde{a}=4$), and the general conclusion holds: there are regions of \gls{snr} for which it is counterproductive to use \gls{harq}.

Finally, \figref{Fig:IID.fading}b shows the throughput with \gls{irharq} in the range $\SNRav\in(1~\!\tnr{dB}, 9~\!\tnr{dB})$ for $\tilde{a}=4$, where $\tilde{\mcD}^\harq_5=[0, \infty)$ (cf.~\exref{Ex:Thresholds}) implies that we effectively ignore the \gls{csi}; meaning that the adaptation to the channel is counterproductive. This observation is explained by the fact that \gls{amc}-\gls{harq} cannot be considered as a fully adaptive transmission since it ignores the \gls{csi} after the first round. In fact, this is the source of a surprising and disappointing behavior of \gls{harq} in high-\gls{snr}s.
\end{example}

\begin{figure}[ht]
\psfrag{xlabel}[ct][ct][\siz]{$\SNRav$ [dB] }
\psfrag{ylabel}[ct][ct][\siz]{Throughput}
\psfrag{G=0.5}[cl][cl][\siz]{$\tilde{a}=0.5$}
\psfrag{G=1}[cl][cl][\siz]{$\tilde{a}=1$}
\psfrag{G=4}[cl][cl][\siz]{$\tilde{a}=4$}
\psfrag{AMC}[cl][cl][\siz]{\gls{amc}}
\psfrag{AMC-HARQ-RR}[cl][cl][\siz]{\gls{rrharq}, $\tilde{\mcD}^\harq_l$}
\psfrag{HARQ-IR-opt}[cl][cl][\siz]{\gls{irharq}, $\tilde{\mcD}^\harq_l$}
\psfrag{HARQ-IR-amc}[cl][cl][\siz]{\gls{irharq}, $\mcD^\amc_l$}
\begin{center}
\scalebox{\sizf}{\includegraphics[width=\linewidth]{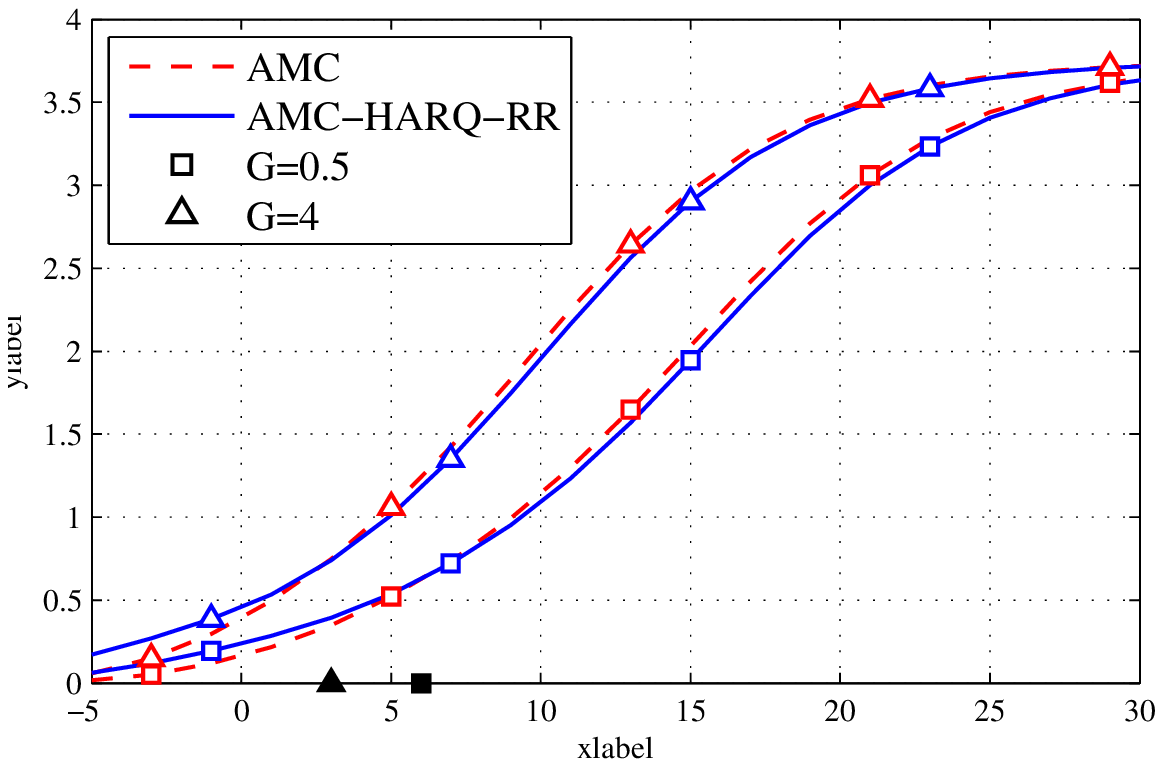}}\\
\scalebox{\siz}{a)}\\
\scalebox{\sizf}{\includegraphics[width=\linewidth]{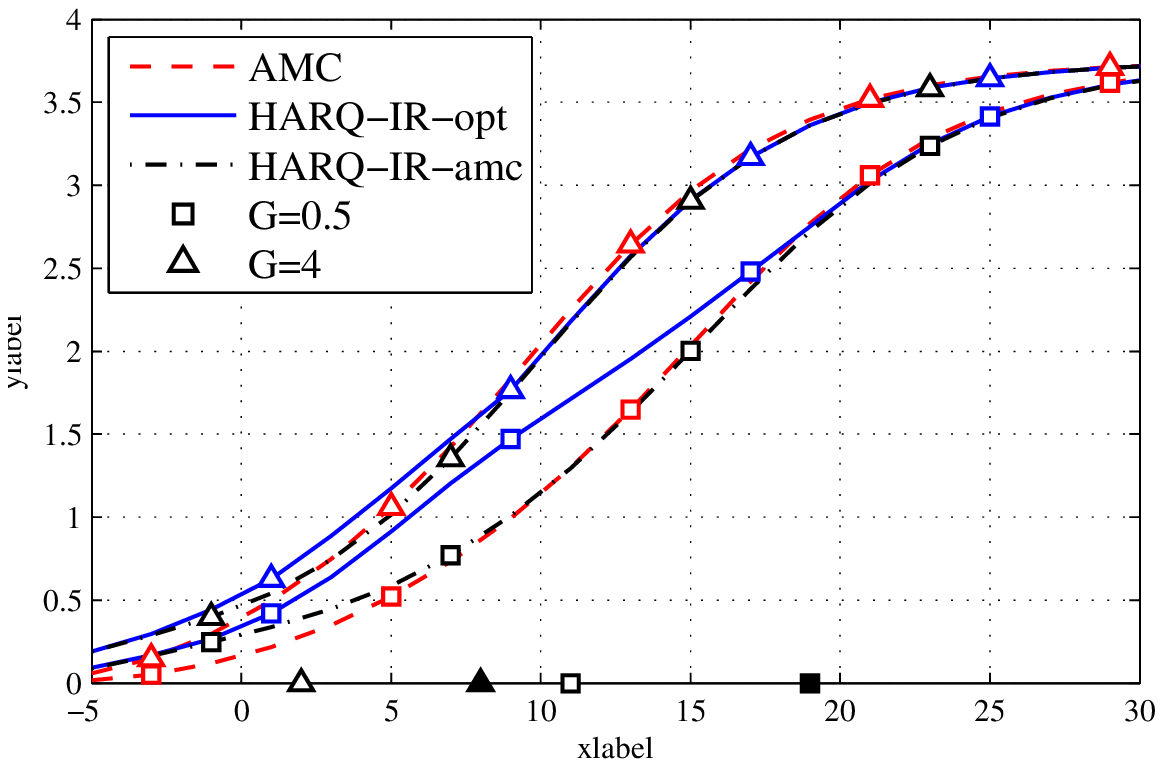}}\\
\scalebox{\siz}{b)}
\caption{The throughput of \gls{amc} and \gls{amc}-\gls{harq} over fast-fading channels for a)~\gls{rrharq} and b)~\gls{irharq}. The markers on the $\SNRav$ axis indicate the points where the \gls{harq} throughput curves cross the throughput curve of \gls{amc}. Filled and hollow markers correspond to \gls{harq} based on $\tilde{\mcD}^\harq_l$ and $\mcD^\amc_l$, respectively.}\label{Fig:IID.fading}
\end{center}
\end{figure}

We will explain now more formally the results we have observed in the examples. We will show that, in the low-\gls{snr} regime, \gls{harq} is more productive, but at high $\SNRav$, \gls{amc} outperforms \gls{amc}-\gls{harq} under mild conditions on the \gls{per} curves of the decoder.

\begin{proposition}[Low SNR]\label{Prop:harq.wins}
Denote by $\hat{\eta}^\harq_K$ the optimal throughput of \gls{harq} based on $\mcD^\harq_l$. There exists $\SNRav_\tnr{o}$ such that, for all $\SNRav<\SNRav_\tnr{o}$,
\begin{align}\label{harq.wins}
\hat{\eta}^\harq_K \geq \eta^\amc.
\end{align}
\begin{proof}
We denote by $\check{\eta}^\harq_K$ the throughput of \gls{harq} based on $\mcD^\amc_l$ where, due to the non-optimized choice of the decision regions, we have $\check{\eta}^\harq_K\leq \hat{\eta}^\harq_K$. Since $\lim_{\SNRav\rightarrow 0}p_l f_{K,l}=0, l>1$, and $\lim_{\SNRav\rightarrow 0}f_{K,1}p_1=f_{K,1}$, we compare the throughputs using the terms appearing in the limits
\begin{align}
\label{lim0.harq}
\lim_{\SNRav\rightarrow 0}\check{\eta}^{\harq}_K&= \frac{R_1(1-f_{K,1})}{\ov{T}_{K,1}},\\
\label{lim0.amc}
\lim_{\SNRav\rightarrow 0}\eta^{\amc}&= R_1(1-f_{1,1}).
\end{align}
Thus, in the low-\gls{snr} regime, both \gls{harq} and the \gls{amc} may be considered as single-rate transmission schemes (\ie without rate adaptation). So, from \propref{Thm22}  we have $\check{\eta}^{\harq}_K \ge \eta^{\amc}$.
\end{proof}
\end{proposition}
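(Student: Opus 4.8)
The plan is to avoid optimizing the HARQ regions $\mcD^\harq_l$ directly and instead to lower-bound the optimal HARQ throughput $\hat{\eta}^\harq_K$ by a conveniently chosen suboptimal design. First I would introduce $\check{\eta}^\harq_K$, the HARQ throughput obtained when the system is forced to reuse the AMC regions $\mcD^\amc_l$ instead of its own optimal regions. Since $\hat{\eta}^\harq_K$ is by definition the maximum of \eqref{eta.harq} over all admissible region families, this substitution can only decrease the throughput, so $\check{\eta}^\harq_K \le \hat{\eta}^\harq_K$. It therefore suffices to establish $\check{\eta}^\harq_K \ge \eta^\amc$ for all sufficiently small $\SNRav$, after which the claim \eqref{harq.wins} follows by transitivity.

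The second step is to identify the low-$\SNRav$ behaviour of both quantities. Under the exponential model \eqref{PDF_Nakagami_inst_SNR}, as $\SNRav\to 0$ the channel mass concentrates near the origin, so the lowest-rate cell absorbs essentially all the probability: $p_1\to 1$ while $p_l\to 0$ for $l>1$. Feeding this into \eqref{eta.amc} and into \eqref{eta.harq} (the latter evaluated on $\mcD^\amc_l$), every $l>1$ term drops out; here one uses that $\ov{T}_{K,l}$ is bounded by $K$ and $f_{K,l}\le 1$, so both $p_l f_{K,l}\to 0$ and $p_l\ov{T}_{K,l}\to 0$. The two throughputs then collapse to the single-rate forms \eqref{lim0.amc} and \eqref{lim0.harq}, namely $R_1(1-f_{1,1})$ and $R_1(1-f_{K,1})/\ov{T}_{K,1}$.

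The crucial recognition is that these limiting values are exactly the single-rate throughputs at the fixed rate $R_1$ of, respectively, the AMC (which is HARQ truncated to one round, $K=1$) and of $K$-round HARQ. I would then invoke \propref{Thm22}: provided the limiting error sequence $\{f_{k,1}\}$ satisfies condition \eqref{fk.cond.sup} — which holds for the decoding model of \secref{Sec:Errors}, and strictly so — the $K$-round scheme dominates the single-round one, yielding $\lim_{\SNRav\to 0}\check{\eta}^\harq_K \ge \lim_{\SNRav\to 0}\eta^\amc$.

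The final and, I expect, hardest step is to upgrade this limiting inequality into the pointwise statement ``for all $\SNRav<\SNRav_\tnr{o}$''. Both $\check{\eta}^\harq_K$ and $\eta^\amc$ are continuous in $\SNRav$, so once the limiting inequality is \emph{strict} (as \propref{Thm22} delivers under the strict form of \eqref{fk.cond.sup}), continuity produces a threshold $\SNRav_\tnr{o}>0$ below which $\check{\eta}^\harq_K>\eta^\amc$, and hence $\hat{\eta}^\harq_K\ge\check{\eta}^\harq_K>\eta^\amc$. The delicate point is justifying the interchange of limit and integration underlying $p_1\to 1$, $p_l\to 0$ and the vanishing of the higher-rate contributions; a dominated-convergence argument (bounded integrands, integrable densities) closes this gap. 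Were the two limiting values to coincide one would instead have to compare the first-order rates at which each throughput approaches its limit — more delicate, but unnecessary here given the strict inequality.
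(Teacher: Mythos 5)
Your proposal follows essentially the same route as the paper's own proof: lower-bounding $\hat{\eta}^\harq_K$ by the throughput $\check{\eta}^\harq_K$ obtained with the AMC regions $\mcD^\amc_l$, showing that as $\SNRav\to 0$ both $\check{\eta}^\harq_K$ and $\eta^\amc$ collapse to single-rate ($R_1$) schemes via \eqref{lim0.harq}--\eqref{lim0.amc}, and invoking \propref{Thm22} to conclude. Your additional care about passing from the limiting inequality to a pointwise statement for $\SNRav<\SNRav_\tnr{o}$ (continuity, dominated convergence, and the possibility that the two limits coincide) is a welcome refinement of a step the paper leaves implicit, but it does not change the underlying argument.
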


For high $\SNRav$, the situation is slightly more involved, and we cannot use \eqref{lim0.harq} because all terms $f_{K,l}p_l$ have the same limit, $\lim_{\SNRav\rightarrow\infty}f_{K,l}p_l=0$. To compare the throughputs of \gls{harq} and \gls{amc}, we will thus define a \gls{trp} as a hypothetical \gls{harq} transmission which, in the first round, behaves just like a conventional \gls{harq} (and is described by the same probability of decoding error), while in the second transmission round, it guarantees that the message is decoded, \ie $f_{2,l}=0$. The throughput of \gls{trp} is given by 
\begin{align}\label{eta.bound}
\eta^{\trp} &\triangleq\frac{\sum_{l=1}^L R_l p_l}{1 + \sum_{l=1}^L  f_{1,l} p_l } \nonumber\\
&=\frac{\sum_{l=1}^L R_l p_l}{1 + \ov{f}_{1} },
\end{align}
where $\ov{f}_{1}\triangleq \sum_{l=1}^L  f_{1,l} p_l$ is the average probability of error in the first round, \ie the probability that re-transmissions are required, $\ov{f}_{1}=\PR{\nack_1}$.\footnote{This should not be confused with the average \gls{per}.} Comparing \eqref{eta.bound} and \eqref{eta.harq}, we see that the throughput of \gls{trp} upper bounds the throughput of \gls{harq}, $\eta^\harq_K <\eta^{\trp}$.

To focus the discussion, we start with the assumption that the decision regions $\mcD^\harq_l$ are not degenerate. 

\begin{proposition}\label{Prop:amc.ge.harq} 
Let $x_{l}\triangleq \lim_{\SNRav\rightarrow\infty}\frac{f_{1,l}p_l}{f_{1,L}p_L}$. 

If  $x_l>0$, $l=1,\ld, L-1$, then there is a $\SNRav_\tnr{o}$ such that, for all $\SNRav>\SNRav_\tnr{o}$,
\begin{align}\label{harq.le.amc}
\eta^{\harq}_K < \eta^\amc.
\end{align}
\begin{proof}
Since $\eta^\harq_K \le\eta^{\trp}$, it is enough to prove that $\eta^{\trp}<\eta^{\amc}$ to obtain a sufficient condition for \eqref{harq.le.amc}. We have
\begin{align}\label{harq.amc.condition}
\eta^{\trp}&<\eta^{\amc} \\
\frac{\sum_{l=1}^L R_l p_l}{1 + \ov{f}_{1} }&<\sum_{l=1}^L R_l (1-f_{1,l})p_l\\
 \sum_{l=1}^L R_l p_l&<(1 + \ov{f}_{1})\sum_{l=1}^L R_l (1-f_{1,l})p_l\\
\sum_{l=1}^L R_l f_{1,l}p_l&<\ov{f}_{1}\eta^\amc,
\label{eta.harq.simple}
\end{align}
and by taking the limit $\SNRav\rightarrow\infty$ of both sides of \eqref{eta.harq.simple}, we get
\begin{align}
\sum_{l=1}^L  R_l x_{l,L} &< \sum_{l=1}^L R_L x_{l,L},
\end{align}
which is obviously true if $x_l>0$. Thus, for sufficiently large $\SNRav$, $\eta^{\amc}>\eta^{\trp}>\eta^\harq_K$.
\end{proof}
\end{proposition}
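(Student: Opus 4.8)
The plan is to bound $\eta^\harq_K$ from above by the two-round throughput $\eta^\trp$ of \eqref{eta.bound} — the inequality $\eta^\harq_K<\eta^\trp$ is already in hand — and then show that $\eta^\trp<\eta^\amc$ for all sufficiently large $\SNRav$. The reason $\eta^\trp$ is the right proxy is that it depends only on the first-round quantities $f_{1,l}$ and $p_l$, so the whole comparison collapses to first-round error behaviour and sidesteps the nested sums of $f_{k,l}$ that make \eqref{eta.harq} unwieldy. Throughout I evaluate the \gls{amc} throughput $\sum_l R_l(1-f_{1,l})p_l$ on the \emph{same} decision regions used for $\eta^\trp$; since the genuine $\eta^\amc$ of \eqref{eta.amc} is obtained by \emph{maximising} this functional over the regions, proving the inequality for the matched-region proxy suffices, as passing to the optimal \gls{amc} regions only increases the right-hand side.

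First I would turn $\eta^\trp<\eta^\amc$ into a polynomial inequality. Both throughputs are positive, so I cross-multiply $\eta^\trp=\frac{\sum_l R_lp_l}{1+\ov f_1}$ against $\eta^\amc=\sum_l R_l(1-f_{1,l})p_l$, expand $(1+\ov f_1)\eta^\amc$, and cancel the common term using $\sum_l R_lp_l=\eta^\amc+\sum_l R_lf_{1,l}p_l$. This leaves the compact equivalent condition
\begin{align}
\sum_{l=1}^L R_lf_{1,l}p_l<\ov f_1\,\eta^\amc,
\end{align}
which is exactly \eqref{eta.harq.simple}.

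The crux is the limit $\SNRav\to\infty$. Here both sides vanish, because $f_{1,l}\to 0$ for every $l$, so the inequality cannot be read off directly — it degenerates to $0<0$. The fix is to divide through by $f_{1,L}p_L>0$, which is precisely the normalisation built into the definition $x_l=\lim_{\SNRav\to\infty}\frac{f_{1,l}p_l}{f_{1,L}p_L}$. The left-hand side then tends to $\sum_l R_lx_l$, while the right-hand side tends to $\big(\sum_l x_l\big)\lim_{\SNRav\to\infty}\eta^\amc=R_L\sum_l x_l$, using $x_L=1$ together with $\eta^\amc\to R_L$ (as the \gls{snr} distribution pushes its mass into the top interval and $f_{1,L}\to 0$). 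The limiting inequality is therefore
\begin{align}
\sum_{l=1}^L R_lx_l<\sum_{l=1}^L R_Lx_l,
\end{align}
which holds strictly: the $l=L$ terms coincide, while for every $l<L$ we have $R_l<R_L$ and, by hypothesis, $x_l>0$, so each of those terms is strictly smaller.

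Finally I would invoke the standard ``strict inequality survives near the limit'' argument: the normalised gap (right minus left) converges to the strictly positive number $\sum_{l<L}(R_L-R_l)x_l$, hence stays positive for all $\SNRav$ beyond some threshold $\SNRav_\tnr{o}$; multiplying back by $f_{1,L}p_L$ restores \eqref{eta.harq.simple}, giving $\eta^\trp<\eta^\amc$ and thus $\eta^\harq_K<\eta^\trp<\eta^\amc$. I expect the main obstacle to be exactly this passage to the limit: one must resist reading \eqref{eta.harq.simple} directly and instead work entirely with the ratios $f_{1,l}p_l/(f_{1,L}p_L)$, where the hypothesis $x_l>0$ is what makes the limiting inequality strict rather than a tie. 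A little additional care is needed to justify $\eta^\amc\to R_L$ (which relies on the \gls{amc} thresholds being fixed in $\SNRav$) and the matched-region/optimality reduction noted above.
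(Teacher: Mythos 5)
Your proposal is correct and follows essentially the same route as the paper's proof: bound $\eta^\harq_K$ by $\eta^{\trp}$, cross-multiply down to \eqref{eta.harq.simple}, and pass to the limit $\SNRav\rightarrow\infty$ where the hypothesis $x_l>0$ makes the limiting inequality strict. You merely make explicit two steps the paper leaves implicit --- the normalisation by $f_{1,L}p_L$ before taking the limit (without which both sides of \eqref{eta.harq.simple} degenerate to $0<0$) and the matched-region justification for comparing against $\eta^\amc$ --- which is a welcome tightening but not a different argument.
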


The value of \propref{Prop:amc.ge.harq} is that it is obtained using solely the model of the decoding errors in the first transmission. Hence, we  do not need to consider approximations with regard to the way the redundancy is introduced by \gls{harq}. We also quickly note that, using the model of the decoding errors from~\eqref{PER.SNR} and~\cite[Eq. (8)]{Liu04}, we obtain $x_{l,L}>0, \forall l$, provided that the decision regions $\mcD^\harq_l$ are not degenerate (\ie when $p_l>0$). 

This brings us to the following question: will we be able to obtain gains by using \gls{harq} on top of \gls{amc} if the decision regions $\mcD^\harq_l$ are optimized? The answer is ``no'', because \propref{Prop:amc.ge.harq} does not assume anything about the decision regions $\mcD^\harq_l$, except that there are at least two (in which case there are at least two non-zero terms $x_l$). This condition materializes for the high-\gls{snr} regime under the conjecture we made in \exref{Ex:Thresholds}.

Another interesting question is whether we can decrease the gap between $\eta^{\amc}$ and $\eta^{\trp}$ by increasing the number of rated in the set $\mcR$? The question may be difficult to answer. Nevertheless we consider here a particular and practical case where we cannot use coding rates below $\R_1$ which is determined by the minimum implementable encoding rate (\eg the coding rate of $1/3$ in the popular turbo-codes) and the minimum modulation rate (\eg \gls{bpsk}). Also, a finite size of the constellation, \eg $16$-\gls{qam}, implies that there is a maximum rate $\R_L$ which cannot be exceeded. We can manipulate, however, the granularity of the set $\mcR$, \ie we can increase the number of rates, $L$, while keeping $\R_1$ and $\R_L$ constant.

\begin{corollary}\label{Corr:gap}
For $\tilde{a}=\infty$, keeping $\R_1$ and $\R_L$ constant, the difference $ \eta^{\amc}-\eta^{\trp}$ increases with $L$.
\end{corollary}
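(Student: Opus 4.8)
The plan is to exploit the degeneracy at $\tilde{a}=\infty$. In that limit the \gls{per} function \eqref{PER.SNR} collapses to a hard step, $\PER_l(\SNR)=\IND{\SNR<\SNR_{\tr{th},l}}$, and with $I(\SNR)=\log_2(1+\SNR)$ the thresholds become $\SNR_{\tr{th},l}=2^{R_l}-1$. From \eqref{gamma.l.Delta} the optimal \gls{amc} borders satisfy $\gamma_l\to\SNR_{\tr{th},l}$, so the decision regions reduce to the clean intervals $[\SNR_{\tr{th},l},\SNR_{\tr{th},l+1})$, on which a transmission at rate $R_l$ never fails; the only band where even $R_1$ errs is the outage band $[0,\SNR_{\tr{th},1})$. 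First I would evaluate both throughputs in closed form over this partition, writing $q_0\triangleq \cdf(\SNR_{\tr{th},1})$ for the outage probability and $q_l\triangleq \cdf(\SNR_{\tr{th},l+1})-\cdf(\SNR_{\tr{th},l})$ for the probability of being served at rate $R_l$.

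With this partition $\eta^\amc=\sum_{l=1}^L R_l q_l$, since every active band is error-free. For the two-round bound $\eta^{\trp}$, a retransmission is triggered only in the outage band (there round~1 fails and round~2 delivers $R_1$ by construction), so in \eqref{eta.bound} one has $\Ex[\mfR]=\eta^\amc+R_1q_0$ and $\Ex[\mfT]=1+q_0$, whence $\eta^{\trp}=(\eta^\amc+R_1q_0)/(1+q_0)$. Subtracting yields the key closed form $\eta^\amc-\eta^{\trp}=\tfrac{q_0}{1+q_0}\,(\eta^\amc-R_1)$. The decisive structural observation is that both the prefactor $q_0/(1+q_0)$ and the offset $R_1$ depend only on $R_1$ (and $\SNRav$), which are held fixed by hypothesis; hence the gap inherits its entire $L$-dependence from $\eta^\amc$.

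It then remains to show that $\eta^\amc$ strictly increases as the rate set $\mcR$ is refined. I would argue by inserting a single rate $R^{\ast}\in(R_j,R_{j+1})$, with threshold $s^{\ast}=2^{R^{\ast}}-1\in(\SNR_{\tr{th},j},\SNR_{\tr{th},j+1})$: the only affected band is the old band $j$, which splits so that the sub-band $[s^{\ast},\SNR_{\tr{th},j+1})$ is now served at rate $R^{\ast}$ instead of $R_j$, raising $\eta^\amc$ by $(R^{\ast}-R_j)\bigl(\cdf(\SNR_{\tr{th},j+1})-\cdf(s^{\ast})\bigr)>0$ (strict because the exponential \gls{snr} density is positive there). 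Since any denser $\mcR$ is obtained by such insertions, $\eta^\amc$, and therefore the gap, increases with $L$. I expect the main obstacle to be conceptual rather than computational: pinning down which decision regions $\eta^{\trp}$ is evaluated on. The clean formula above uses the \gls{amc}-type intervals, so if instead $\mcD^\harq_l$ is re-optimized for the two-round scheme, one must verify that this re-optimization—which can only raise $\eta^{\trp}$—does not grow fast enough with $L$ to reverse the increase of the gap. Establishing that the region re-optimization decouples from the staircase refinement of $\eta^\amc$ is the step I would treat most carefully.
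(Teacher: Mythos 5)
Your proposal is correct and follows essentially the same route as the paper: both reduce the gap to the identity $\eta^{\amc}-\eta^{\trp}=\ov{f}_{1}(\eta^{\amc}-R_1)/(1+\ov{f}_{1})$ (your $q_0$ being exactly $\ov{f}_1$ in the $\tilde{a}=\infty$ case), observe that $\ov{f}_1$ and $R_1$ are independent of $L$ since the outage band $[0,\SNR_{\tnr{th},1})$ is the only error source, and conclude from the monotonicity of $\eta^{\amc}$ in $L$. You additionally spell out the rate-insertion argument for that monotonicity, which the paper merely asserts.
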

\begin{proof}
We first rewrite \eqref{harq.amc.condition} as follows
\begin{align} 
 \eta^{\amc}-\eta^{\trp}= \frac{\ov{f}_{1}(\eta^{\amc}-R_1)}{1+\ov{f}_{1}}.
\end{align}
Then, for $\tilde{a}=\infty$, $\ov{f}_{1}$ is independent from $L$ because the only source of error is the event $\set{\SNRrv\in (0, \SNR_{\tnr{th},1} )}$. Since $\R_1$ is fixed by assumption, and $\eta^{\amc}$ is increasing with $L$, the difference $\eta^{\amc}-\eta^{\trp}$ can only increase with $L$.
\end{proof}

Although \corref{Corr:gap} does not talk about  the gap $\eta^{\amc}-\eta^{\harq}$, according to our observations, for high-\gls{snr} regime, we obtain a $\eta^{\harq}$  indistinguishable from $\eta^\trp$. Moreover, setting a practical (finite) value for $\tilde{a}$ does not change the results, and we indeed observe that the difference $\eta^{\amc}-\eta^{\harq}$ increases with $L$ even if $\tilde{a}<\infty$.

\subsection{Discussion}\label{Sub:discussion}

The presented results indicate that adding \gls{harq} on top of the \gls{amc} may be counterproductive in terms of throughput. This finding is not only counterintuitive; but goes against the general belief that both the \gls{amc} and \gls{harq} complement each other in their operations. We try to provide more intuition about this unusual---at first glance---behavior. We do it in two ways: first, by exploring the temporal relationships of \gls{csi}, and second, by reinterpreting the equation in the proof of~\propref{Prop:amc.ge.harq}. 

\subsubsection{Temporal evolution of the channel states}\label{Sec:Temporal}
Let us start with the relationship between the channel state in two different time instants assuming operation in high $\SNRav$. Then, the \gls{amc} is very likely to choose a high transmission rate, \eg $R_L$. A small rate, \eg  $R_1$, may be chosen as well, which happens when the measured \gls{snr}, $\SNR[n]$, falls into the interval $(0,\gamma_2)$; this event has a probability $p_1>0$. Let us assume that the error occurs when transmitting with the rate $R_1$, so the reward in time $n$ is zero, $\mfR[n]=0$. We can now compare the expected reward in the block $n+1$ due to the adoption of \gls{amc} or \gls{amc}-\gls{harq}.

With the \gls{amc}-\gls{harq}, the error is handled by the second \gls{harq} round where  
the chances for a successful decoding (both for \gls{irharq} and \gls{rrharq}), $f_{1,2}$ are high; the expected reward in this second round is thus given by  
\begin{align}\label{reward.harq.2}
\Ex\big[\mfR[n+1]\big]=R_1(1-f_{1,2})\approx R_1.
\end{align}
On the other hand, with the \gls{amc}, the error is handled by the \gls{llc}, so the \emph{new} packet is transmitted in time $n+1$ with the rate $R_l$. The expected reward in this case with the \gls{amc} in time $n+1$ is thus 
\begin{align}\label{reward.amc.1}
\Ex\big[\mfR[n+1]\big]=\R_l(1-f_{l,1})>R_{l-1},
\end{align}
where the loose lower bound is obtained by using~\eqref{WEP.R.R} and the fact that $f_{l,1}<\PER_l(\gamma_l)$.

We can see that the reward \eqref{reward.harq.2} due to a small probability of error, $f_{1,2}\approx 0$ (obtained thanks to \gls{harq}), is negligible corresponding to the worst-case gain of the \gls{amc} given by \eqref{reward.amc.1}. In other words, even if the selection of the rate $R_1$ occurs with a very small probability $p_1$, its effect propagates to the following time instants.

\subsection{Probabilistic interpretation}\label{Sec:Proba}
Let us transform~\eqref{eta.harq.simple} and evaluate its limits
\begin{align}\label{}
\frac{\sum_{l=1}^L R_l f_{1,l}p_l}{\ov{f}_{1}}
=
\frac{\sum_{l=1}^L R_l \PR{\nack, \R_l}}{\PR{\nack}}
\label{proba.ineq}
&=
\sum_{l=1}^L R_l \PR{\R_l|\nack}<\eta^{\amc},\\
\label{proba.limit}
&\sum_{l=1}^L R_l \PR{\R_l|\nack}<\R_L,
\end{align}
where \eqref{proba.limit} is obtained by taking the limits of both sides of~\eqref{proba.ineq} for~$\SNRav\rightarrow\infty$.

If, for all $l<L$, $\PR{\R_l|\nack}>0$, then~\eqref{proba.limit} is satisfied. This is another version of the conditions required in \propref{Prop:amc.ge.harq}. It can be verbalized as follows: knowing that an error occurred (\ie $\nack$ is observed), the a posteriori probability of choosing a rate $R_l$, $l=1,\ld, L-1$ does not go to zero. This means that, for the high-\gls{snr} regime, the error event $\nack$ provides information about the realization of the \gls{snr} which led to the error.

The most obvious interpretation of this relationship is obtained for $\tilde{a}=\infty$ with the decision regions $\mcD^\harq_l=\mcD^\amc_l$, where $\gamma_1=0$ and $\gamma_l=\SNR_{\tnr{th},l}$ for  $l>1$. Then, the only source of error is the event $\SNRrv\in (0, \SNR_{\tnr{th},1} )$. Since this event occurs only when the rate $\R_1$ is selected, we can write $\PR{\R_1|\nack}=1$. That is, even if the a priori probability $p_1=\PR{\R=\R_1}$ can be arbitrarily small, the a posteriori probability can be very large.

\section{Improving the Throughput via AMC-HARQ Interaction}\label{Sec:Boosting.HARQ}

With the lessons from \secref{Sec:Block.fading}, we explore in this section two simple solutions to limit the throughput penalty incurred by the adoption of \gls{harq} on top of the \gls{amc} in fast-fading channels. We do it in two steps. First, we will try to remove the source of the throughput penalty by making \gls{harq} aware of the \gls{csi}; next, we will explore the possibility of increasing the throughput with a more advanced coding.

\subsection{AMC-Aware HARQ: Packet dropping}\label{Sec:Aware.harq}
The main point from the previous analysis is that, while the receiver observes the \gls{csi}, the adaptation has a very limited scope: the parameters of \gls{harq} are fixed in the first round, and the subsequent ones ($k=2,\ld, K$) are not affected by the varying \gls{csi}.

In \secref{Sub:discussion}, we have also identified situations where \gls{harq} used on top of the \gls{amc} is counterproductive: suppose we observe in the $k$-th round the \gls{mcs} index, $\hat{l}_k$. Using the \gls{amc} the reward can be close to $\R_{\hat{l}_k}$, see \eqref{eta.l.amc}. On the other hand, the reward related to \gls{harq} depends on the first round \gls{mcs} index, $\hat{l}_1$, and may be close to $\R_{\hat{l}_1}$. Therefore, if $\R_{\hat{l}_k}$ is greater than $\R_{\hat{l}_1}$, the  \gls{amc} should be used; this modifies the protocol for rounds $k>1$ in the following manner
\begin{align}
\label{cont.harq}
\text{if }\hat{l}_k \leq \hat{l}_1 &\implies\quad \text{continue HARQ}\\
\label{stop.harq}
\text{if }\hat{l}_k > \hat{l}_1 &\implies\quad \text{restart HARQ (drop the packet)}.
\end{align}

In this way, we adapt the number of \gls{harq} transmission rounds to the observed \gls{mcs} indices. Quite clearly, only the performance of \gls{harq} in fast-fading channels will be affected because, under slow-fading model, we observe $\hat{l}_1=\hat{l}_2=\ld=\hat{l}_K$, and subsequently \eqref{cont.harq} applies in all rounds.

Using \eqref{stop.harq}, the \gls{harq} the number of rounds of one cycle may depend on the rate used in the cycle which follows. Thus, to calculate the throughput of the \gls{pdharq} we have just introduced,  we cannot directly use the previous formulas based on the expected reward and the expected duration in \emph{one} \gls{harq} cycle. 

However, this is not an issue since we are merely interested in the throughput with fixed transmission strategy parameters, and we will also use the predefined decision regions of \gls{amc}, $\mcD^\amc_l$. Therefore, to obtain the results we show in \figref{Fig:MultiPacket}, we used a Monte Carlo integration. We can appreciate that the packet dropping proposed in \eqref{stop.harq} practically eliminates the throughput loss introduced by \gls{harq}. This is an indirect confirmation that the main source of the throughput gap was correctly identified in \secref{Sub:discussion}.

\begin{figure}[ht]
\psfrag{xlabel}[ct][ct][\siz]{$\SNRav$~[dB] }
\psfrag{ylabel}[ct][ct][\siz]{$\gamma_l$ [dB]}
\psfrag{G=0.5}[cl][cl][\siz]{$\tilde{a}=0.5$}
\psfrag{G=1}[cl][cl][\siz]{$\tilde{a}=1$}
\psfrag{G=4}[cl][cl][\siz]{$\tilde{a}=4$}
\psfrag{AMC}[cl][cl][\siz]{\gls{amc}}
\psfrag{HARQ-packet-drop}[cl][cl][\siz]{\gls{pdharq}}
\psfrag{HARQ-VL}[cl][cl][\siz]{\gls{vlharq}}
\begin{center}
\scalebox{\sizf}{\includegraphics[width=\linewidth]{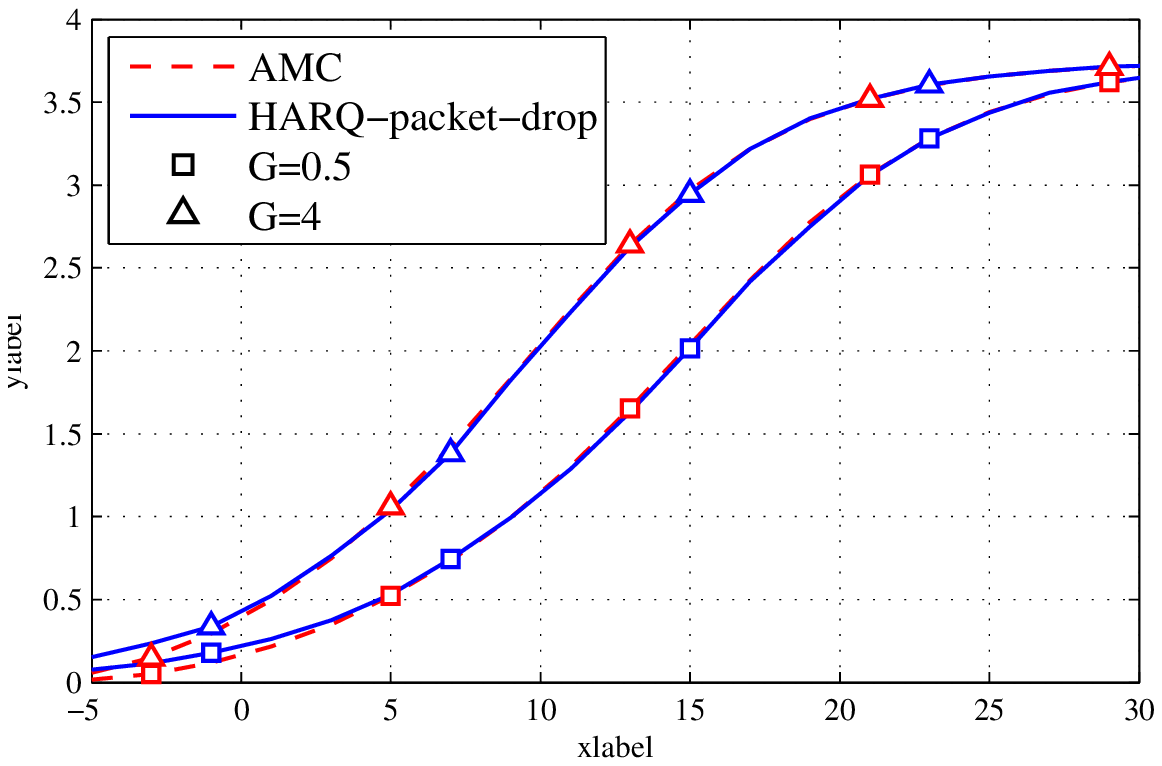}}\\
\scalebox{\siz}{a)}\\
\scalebox{\sizf}{\includegraphics[width=\linewidth]{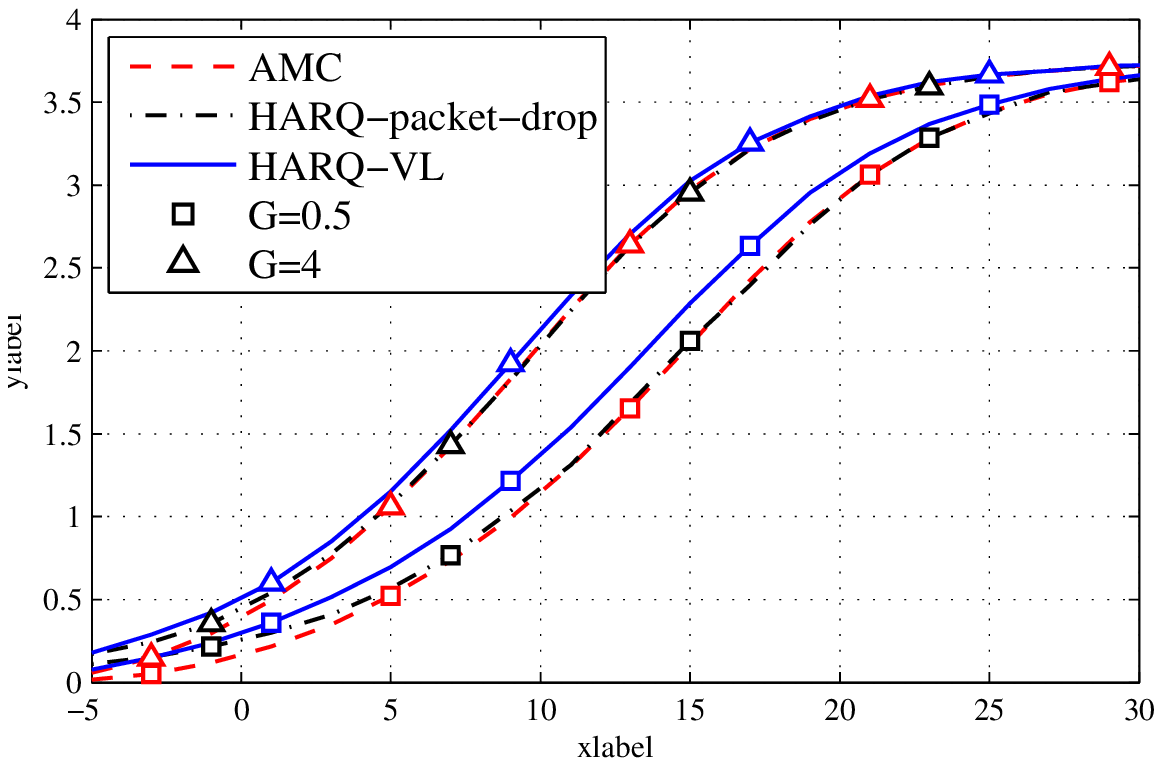}}\\
\scalebox{\siz}{b)}
\caption{The throughput obtained using packet-dropping \gls{pdharq} for a)~\gls{rrharq}, and  b)~\gls{irharq}, and using variable-length \gls{vlharq}, which can be implemented only for \gls{irharq}.}\label{Fig:MultiPacket}
\end{center}
\end{figure}

\subsection{Adaptive HARQ: Variable-Length Coding}\label{Sec:vlharq}
Here, we would like to  improve the throughput beyond what \gls{amc} alone can offer, especially in the high-\gls{snr} regime. We follow \cite{Uhlemann03,Visotsky03,Gopalakrishnan08, Kim11,Szczecinski13}, which proposed to use variable-length coding  for \gls{harq}; the main idea of \gls{vlharq} is to reduce the number of symbols involved in the transmission of a packet. This can be done by decreasing the length of the codewords in rounds $k=2,\ld, \kmax$.

In order to comply with the constant-length channel block assumptions we used previously, modifications are needed regarding  the way the packets are encoded and transmitted in a channel block. 

First, we redefine the operation of the \gls{amc}. Previously, the rate $\R_l$ was interpreted as a transmitting of one packet containing $\R_l\Ns$ bits per block.  We assume henceforth that the packet contents is fixed to $\Nb=R_1\Ns$ bits so selecting the rate $\R_l$ means that the transmission requires $\Nsl{l}=\Nb/\R_l=\ell_l\Ns$ symbols, where $\ell_l=R_1/R_l$ is the normalized length of the codeword.\footnote{Equivalently, $\ell_l$ is the fraction of $\Ns$ required to transmit the packet with the rate $\R_l$.} If we also use $\R_l=l \R_1$, $l=1,\ld, L$ (as we already did in the numerical examples), then the \gls{amc} transmission with the rate $\R_l$ means that $l$ packets are transmitted in one channel block, as done also in \cite{Liu04,Le06,Szczecinski13}. 

Since the relationship between the rate $\R_l$ and the  length $\ell_l$ is bijective, it is convenient to reformulate the discussion in terms of packet length. The first transmission round of \gls{harq} is done using codewords $\bx_1$ with the predefined length taken from the set  $\mcL=\set{1,1/2,\ld, 1/L}$; we are thus compatible with the \gls{amc}. On the other hand, in the subsequent \gls{harq} rounds we are allowed to use the subcodewords $\bx_k$ with arbitrary length taken from $\mcL$. Since the idea is to use shorter subcodewords in order to decrease the channel use, we may also expand the decision space and use auxiliary set of codewords lengths $\mcL'={\ell_{L+1}, \ld, \ell_{L'}}$, where $\ell_{k-1}<\ell_{k}, k=L+1,\ld, L'$. 

Since it is allowed to transmit in the same channel block the ``fresh'' (not transmitted) packets as well as the redundancy codewords for the \gls{nack}ed packets, the operation of \gls{harq} needs to be modified. 

To do so, we assume that all packets, which are candidates for the transmission are gathered in the \gls{harq} buffer and we index them with $h\in\set{1,\ld, H}$, where $H\leq \kmax\cd L $ is the maximum number of packets we need to consider. Let $\set{\ell_h}_{h=1}^H$ denote the length we will assign to each packets in the \gls{harq} buffer, where setting $\ell_h=0$ means, de facto, that the packet is not ``scheduled" for transmission.

We will maximize the instantaneous reward for a given \gls{snr}, $\SNR$. As all the packets carry $\Nb$ bits, we solve the following problem:
\begin{align}\label{Opt.MultiPacket.Throughput}
\set{\hat{\ell}_h}&=\argmax_{\set{\ell_h}} \sum_{h=1}^H (1-f(h)),
\quad\text{s.t.}\quad \sum_{h=1}^H \ell_h\leq1
\end{align}
where $f(h)$ is the conditional probability of error for the $h$-th packet, calculated as
\begin{align}\label{f.h.def}
f(h)=\PR{\err_{k(h)+1}|\err_{k(h)}}
\end{align}
and $k(h)$ is the \gls{harq} counter of the $h$-th packet (\ie the number of time the packet was scheduled for transmission).

To  calculate \eqref{f.h.def} we first express  the probability of decoding error after $k(h)$ transmissions with the codewords' lengths $\ell_{1}(h),\ld,\ell_{k(h)}(h)\in\set{\mcL,\mcL'}$ using \eqref{MD.2.1D}--\eqref{aggreg.SNR}
\begin{align}\label{PER:multipacket}
\PR{\err_k|\ell_{1}(h),\ld,\ell_{k(h)}(h) }=\PER_{l(h)}(\SNR^\Sigma(h)),
\end{align}
where, $l(h)$ is the index of the first transmission (\ie $\ell_{l(h)}=\ell_{1}(h)$), and for \gls{irharq} the $\SNR^\Sigma(h)$ is given by 
\begin{align}\label{aggreg.SNR.Multi}
\SNR^\Sigma(h)=I^{-1}\Big( \frac{1}{\ell_{1}(h)}\sum_{t=1}^{k(h)} \ell_{t}(h) I(\SNR_t(h))\Big),
\end{align}
where $\SNR_{t}(h)$ is the \gls{snr} experienced by the $h$-th packet during the $t$-th round and \eqref{aggreg.SNR.Multi} expresses the fact that the \gls{mi} accumulation depends on the subcodeword lengths. Note that, as expected, \eqref{PER:multipacket} is equivalent to \eqref{MD.2.1D} if $\ell_{1}(h)=\ld=\ell_{k(h)}(h)$.

Thus, using \eqref{NACK.ERR} we calculate \eqref{f.h.def} as
\begin{align}
f(h)=\frac{\PER_{l(h)}(\SNR'(h))}{\PER_{l(h)}(\SNR^\Sigma(h))},
\end{align}
where
\begin{align}\label{SNR.prime}
\SNR'(h) &= I^{-1}\Big(I(\SNR^\Sigma(h))+\frac{ \ell_h}{\ell_1(h)}\SNR\Big)
\end{align}
is the aggregate \gls{snr} that will be experienced by the $h$-th packet provided it is scheduled for transmission.

After erroneous decoding  of the $h$-th packet, signaled by a \gls{nack} message, $k(h)$ is incremented if $k(h)<\kmax-1$ (otherwise the packet is discarded from the \gls{harq} buffer), and $\SNR^\Sigma(h)$ is updated using \eqref{aggreg.SNR.Multi} 
\begin{align}\label{SNR.sigma.next}
 \SNR^\Sigma(h)&\leftarrow\SNR'(h).
\end{align}

We thus see that it is not necessary to keep track of all lengths used to transmit  the packet $h$: observing the current \gls{snr}, $\SNR$, and knowing the triplet $\big(k(h), \ell_{1}(h), \SNR^{\Sigma}(h)\big)$, for all the packets $h=1,\ld, H$, we have enough information to calculate the probability of error for each scheduled packet.

We note that, if all the packets in the buffer are fresh, the optimization problem \eqref{Opt.MultiPacket.Throughput} is equivalent to finding the throughput-maximizing rates as done for \gls{amc} when the decision regions are defined by~\eqref{mcD.l}. On the other hand, in presence of packets with \gls{harq} counters $k(h)>0$, the notion of decision regions is less clear; since they become dependent on the parameters of all the packets in the \gls{harq} buffer via \eqref{Opt.MultiPacket.Throughput}, the  simple one-to-one relationship between the packet and the transmission rate does not exist anymore.

To make a comparison with the \gls{amc} fair, we assume that in order to have a decodable first transmission (\ie $f(h)<1$) we must use the length $\ell\in\mcL$. Again, this is motivated by the practical considerations: for example, using a finite-size constellation, such as 16-\gls{qam}, any transmission with rate $\R>4$ (or $\ell<R_1/4$) fails.  Therefore, using the instantaneous throughput criterion, the first transmission length must be taken from $\mcL$ so we cannot exploit $\mcL'$ to improve the throughput of the \gls{amc}. 

We terminate noting  that the memoryless scheduling decisions obtained solving \eqref{Opt.MultiPacket.Throughput} are suboptimal in terms of long-term throughput. In general, finding the optimal set $\set{\hat{\ell}_h}_{h=1}^H$ is difficult mainly due to the dependence of the throughput on the future decisions and would have to be solved using a framework of \gls{mdp}; a problem is compounded by the dimensionality of the observation space.

The numerical results we show in \figref{Fig:MultiPacket} are obtained by Monte Carlo simulations, where the set of lengths $\mcL=\set{1,1/2,1/3,1/4,1/5}$ is the same as in all previous examples, and the auxiliary set is given by $\mcL' =\set{1/8, 1/12, 1/16}$. The discrete optimization problem defined in \eqref{Opt.MultiPacket.Throughput} was solved  by exhaustive search over the entire solution space; this is possible here due to a relatively small dimensionality of the sets $\mcL$ and $\mcL'$. 

The improvement of the throughput of \gls{vlharq} comparing to \gls{amc}-\gls{harq} is clear for high values of $\SNRav$ despite of the sub-optimality of the objective function \eqref{Opt.MultiPacket.Throughput}. We conducted more experiments including smaller length $\ell$ in $\mcL'$, but changes in the throughput we observed were not significant. On the other hand, it is was indeed necessary to use $\mcL'$ in order to improve the throughput in high \gls{snr}. This is because, under such operating conditions, the first transmission is very likely to use the shortest codewords from $\mcL$ and thus, to decrease the average transmission time, we need to use shorter codewords from $\mcL'$.

\section{Conclusions}\label{Sec:Conclusions}
In this work, we provided a general communication-theoretic point of view on the benefits of combining \gls{amc} and \gls{harq} in point-to-point transmissions. Using the throughput as a comparison criterion, the main conclusions are the following: i)~in slow-fading channels, the throughput increases thanks to \gls{harq} but the improvements are very moderate, especially when \gls{amc} ignores the presences of \gls{harq}, ii)~in fast-fading channels, \gls{harq} is beneficial only in low \gls{snr}s and is counterproductive in high \gls{snr}s, irrespectively of the adaptation efforts of \gls{amc}. Then, since \gls{harq} provides very small (if any) throughput gains, the error-free operation may be taken in charge by the upper layer (\gls{llc}). 


Following the identified sources of deficiencies of \gls{harq} used on top of \gls{amc}, we also proposed a simple modification to the \gls{harq} protocol, which terminates the \gls{harq} cycle on the basis of the observed \gls{mcs}. This modification can be implemented without any additional signaling or change at \gls{phy}, and removes the observed throughput penalty in fast fading channels. We pursued the idea of a closer interaction between \gls{phy} and \gls{harq}, and evaluated the possibility of using packet dropping (\gls{pdharq}) which provide a simply remedy to the throughput penalty. At the cost of more complicated implementation, the variable-length coding (\gls{vlharq}) leads to a moderate---yet notable---improvement of the throughput in the region of high \gls{snr}.

The conclusions and observations we make hold for delay-insensitive applications, and do not take into account the overhead necessary to retransmit the packet at the \gls{llc}. Thus,  a more realistic evaluation should differentiate between the cost of \gls{phy} and \gls{llc} transmissions. Other important challenges would be to evaluate the performance of \gls{harq} and \gls{amc} in the interference-limited scenario or for a delay-sensitive traffic.


\balance


\begin{thebibliography}{10}
\providecommand{\url}[1]{#1}
\csname url@samestyle\endcsname
\providecommand{\newblock}{\relax}
\providecommand{\bibinfo}[2]{#2}
\providecommand{\BIBentrySTDinterwordspacing}{\spaceskip=0pt\relax}
\providecommand{\BIBentryALTinterwordstretchfactor}{4}
\providecommand{\BIBentryALTinterwordspacing}{\spaceskip=\fontdimen2\font plus
\BIBentryALTinterwordstretchfactor\fontdimen3\font minus
  \fontdimen4\font\relax}
\providecommand{\BIBforeignlanguage}[2]{{%
\expandafter\ifx\csname l@#1\endcsname\relax
\typeout{** WARNING: IEEEtran.bst: No hyphenation pattern has been}%
\typeout{** loaded for the language `#1'. Using the pattern for}%
\typeout{** the default language instead.}%
\else
\language=\csname l@#1\endcsname
\fi
#2}}
\providecommand{\BIBdecl}{\relax}
\BIBdecl

\bibitem{Alouini00}
M.-S. Alouini and A.~J. Goldsmith, ``Adaptive modulation over {N}akagami fading
  channels,'' \emph{Kluwer Journal on Wireless Communications}, vol.~13, no.
  1--2, pp. 119--143, May 2000.

\bibitem{Goeckel99}
D.~Goeckel, ``Adaptive coding for time-varying channels using outdated fading
  estimates,'' \emph{{IEEE} Trans. Commun.}, vol.~47, no.~6, pp. 844--855, Jun.
  1999.

\bibitem{Lin03_Book}
S.~Lin and J.~D.~J.~Costello, \emph{Error Control Coding}, 2nd~ed.\hskip 1em
  plus 0.5em minus 0.4em\relax Upper Saddle River, New Jersey 07458, USA:
  Prentice-Hall, Inc., 2003.

\bibitem{3GPP_TS_36.321}
3GPP, ``{3GPP} {TS} 36.321 {V}13.1.0 release 13 (2016-03): {LTE}; {E}volved
  {U}niversal {T}errestrial {R}adio {A}ccess ({E-UTRA}); {M}edium access
  control ({MAC}) protocol specification.''

\bibitem{3GPP_TS_36.213}
------, ``{3GPP} {TS} 36.213 {V}13.1.1 release 13 (2016-03): {LTE}; {E}volved
  {U}niversal {T}errestrial {R}adio {A}ccess ({E-UTRA}); {P}hysical layer
  procedures.''

\bibitem{Hagenauer88}
J.~Hagenauer, ``Rate-compatible punctured convolutional codes ({RCPC} codes)
  and their applications,'' \emph{{IEEE} Trans. Commun.}, vol.~36, no.~4, pp.
  389--400, Apr. 1988.

\bibitem{Caire01}
G.~Caire and D.~Tuninetti, ``The throughput of hybrid-{ARQ} protocols for the
  {G}aussian collision channel,'' \emph{{IEEE} Trans. Inf. Theory}, vol.~47,
  no.~5, pp. 1971--1988, Jul. 2001.

\bibitem{Benelli85}
G.~Benelli, ``An {ARQ} scheme with memory and soft error detectors,''
  \emph{{IEEE} Trans. Commun.}, vol.~33, no.~3, pp. 285--288, Mar. 1985.

\bibitem{Chase85}
D.~Chase, ``Code combining--{A} maximum-likelihood decoding approach for
  combining an arbitrary number of noisy packets,'' \emph{{IEEE} Trans.
  Commun.}, vol.~33, no.~5, pp. 385--393, May 1985.

\bibitem{Kim07}
T.~Kim and M.~Skoglund, ``On the expected rate of slowly fading channels with
  quantized side information,'' \emph{{IEEE} Trans. Commun.}, vol.~55, no.~4,
  pp. 820--829, Apr. 2007.

\bibitem{3GPP_TS_36.104}
3GPP, ``{3GPP} {TS} 36.104 {V}13.3.0 release 13 (2016-03): {LTE}; {E}volved
  {U}niversal {T}errestrial {R}adio {A}ccess ({E-UTRA}); {B}ase station ({BS})
  radio transmission and reception.''

\bibitem{Jabi16}
M.~Jabi, M.~Benjillali, L.~Szczecinski, and F.~Labeau, ``Energy efficiency of
  adaptive {HARQ},'' \emph{{IEEE} Trans. Commun.}, vol.~64, no.~2, pp.
  818--831, Feb. 2016.

\bibitem{Liu04}
Q.~Liu, S.~Zhou, and G.~B. Giannakis, ``Cross-layer combining of adaptive
  modulation and coding with truncated {ARQ} over wireless links,''
  \emph{{IEEE} Trans. Wireless Commun.}, vol.~3, no.~5, pp. 1746--1755, Sep.
  2004.

\bibitem{Wan06}
L.~Wan, S.~Tsai, and M.~Almgren, ``A fading-insensitive performance metric for
  a unified link quality model,'' vol.~4, Apr. 2006, pp. 2110--2114.

\bibitem{Pauli2007}
\BIBentryALTinterwordspacing
M.~Pauli, U.~Wachsmann, and S.~Tsai, ``Quality determination for a wireless
  communications link,'' Apr. 2003. [Online]. Available:
  \url{https://www.google.com/patents/US7231183}
\BIBentrySTDinterwordspacing

\bibitem{Park15}
P.~Sungwoo, R.~C. Daniels, and R.~W. {Heath Jr.}, ``Optimizing the target error
  rate for link adaptation,'' 2015.

\bibitem{Alvarado15}
A.~Alvarado, E.~Agrell, D.~Lavery, R.~Maher, and P.~Bayvel, ``Replacing the
  soft-decision {FEC} limit paradigm in the design of optical communication
  systems,'' \emph{J. Lightwave Techn.}, vol.~33, no.~20, pp. 4338--4352, 2015.

\bibitem{Long09}
H.~Long, W.~Wang, K.~Zheng, and F.~Wang, ``Performance analysis on conditional
  error ratio in {HARQ} transmission,'' in \emph{IEEE International Symposium
  on Microwave, Antenna, Propagation and EMC Technologies for Wireless
  Communications}, Oct. 2009, pp. 297--302.

\bibitem{Gu06}
J.~Gu, Y.~Zhang, and D.~Yang, ``Modeling conditional {FER} for hybrid {ARQ},''
  \emph{{IEEE} Commun. Lett.}, vol.~10, no.~5, pp. 384--386, May 2006.

\bibitem{Zheng05}
H.~Zheng and H.~Viswanathan, ``Optimizing the {ARQ} performance in downlink
  packet data systems with scheduling,'' \emph{{IEEE} Trans. Wireless Commun.},
  vol.~4, no.~2, pp. 495--506, Mar. 2005.

\bibitem{Lagrange10}
X.~Lagrange, ``Throughput of {HARQ} protocols on a block fading channel,''
  \emph{{IEEE} Commun. Lett.}, vol.~14, no.~3, pp. 257--259, Mar. 2010.

\bibitem{Kim08b}
D.~Kim, B.~C. Jung, H.~Lee, D.~K. Sung, and H.~Yoon, ``Optimal modulation and
  coding scheme selection in cellular networks with hybrid-{ARQ} error
  control,'' \emph{{IEEE} Trans. Wireless Commun.}, vol.~7, no.~2, pp.
  5195--5201, Dec. 2008.

\bibitem{Yu10}
C.-H. Yu, A.~Hellsten, and O.~Tirkkonen, ``Rate adaptation of {AMC}/{HARQ}
  systems with {CQI} errors,'' May 2010, pp. 1--5.

\bibitem{Ramis11}
J.~Ramis and G.~Femenias, ``Cross-layer design of adaptive multirate wireless
  networks using truncated {HARQ},'' \emph{{IEEE} Trans. Veh. Technol.},
  vol.~60, no.~3, pp. 944--954, Mar. 2011.

\bibitem{Zhang13}
P.~Zhang, Y.~Miao, and Y.~Zhao, ``Cross-layer design of {AMC} and truncated
  {HARQ} using dynamic switching thresholds,'' Apr. 2013, pp. 906--911.

\bibitem{Le06}
L.~Le, E.~Hossain, and A.~Alfa, ``Service differentiation in multirate wireless
  networks with weighted round-robin scheduling and {ARQ}-based error
  control,'' \emph{{IEEE} Trans. Commun.}, vol.~54, no.~2, pp. 208--215, Feb.
  2006.

\bibitem{Wang07}
X.~Wang, Q.~Liu, and G.~Giannakis, ``Analyzing and optimizing adaptive
  modulation coding jointly with {ARQ} for {QoS}-guaranteed traffic,''
  \emph{{IEEE} Trans. Veh. Technol.}, vol.~56, no.~2, pp. 710--720, Mar. 2007.

\bibitem{Le07}
L.~Le, E.~Hossain, and M.~Zorzi, ``Queueing analysis for {GBN} and {SR} {ARQ}
  protocols under dynamic radio link adaptation with non-zero feedback delay,''
  \emph{{IEEE} Trans. Wireless Commun.}, vol.~54, no.~9, pp. 3418--3428, Feb.
  2007.

\bibitem{Ishizaki07}
F.~Ishizaki and H.~G. Uk, ``Cross-layer design and analysis of wireless
  networks using the effective bandwidth function,'' \emph{{IEEE} Trans.
  Wireless Commun.}, vol.~6, no.~9, pp. 3214--3219, Sep. 2007.

\bibitem{Femenias09}
G.~Femenias, J.~Ramis, and L.~Carrasco, ``Using two-dimensional {M}arkov models
  and the effective-capacity approach for cross-layer design in
  {AMC}/{ARQ}-based wireless networks,'' \emph{{IEEE} Trans. Veh. Technol.},
  vol.~58, no.~8, pp. 4193--4203, Oct. 2009.

\bibitem{Zorzi96}
M.~Zorzi and R.~Rao, ``On the use of renewal theory in the analysis of {ARQ}
  protocols,'' \emph{{IEEE} Trans. Commun.}, vol.~44, no.~9, pp. 1077--1081,
  Sep. 1996.

\bibitem{Wolff89_Book}
R.~W. Wolff, \emph{Stochastic Modeling and the Theory of Queues}.\hskip 1em
  plus 0.5em minus 0.4em\relax Prentice-Hall, Inc., 1989.

\bibitem{Shen09}
C.~Shen, T.~Liu, and M.~Fitz, ``On the average rate performance of hybrid-{ARQ}
  in quasi-static fading channels,'' \emph{{IEEE} Trans. Commun.}, vol.~57,
  no.~11, pp. 3339--3352, Nov. 2009.

\bibitem{Makki14}
B.~Makki and T.~Eriksson, ``On the performance of {MIMO-ARQ} systems with
  channel state information at the receiver,'' \emph{{IEEE} Trans. Commun.},
  vol.~62, no.~5, pp. 1588--1603, May 2014.

\bibitem{crouzeix1991}
J.-P. Crouzeix and J.~A. Ferland, ``Algorithms for generalized fractional
  programming,'' \emph{Mathematical Programming}, vol.~52, no. 1-3, pp.
  191--207, 1991.

\bibitem{Uhlemann03}
E.~Uhlemann, L.~K. Rasmussen, A.~Grant, and P.-A. Wiberg, ``Optimal
  incremental-redundancy strategy for type-{II} hybrid {ARQ},'' 2003, p. 448.

\bibitem{Visotsky03}
E.~Visotsky, V.~Tripathi, and M.~Honig, ``Optimum {ARQ} design: a dynamic
  programming approach,'' Jun. 2003, p. 451.

\bibitem{Gopalakrishnan08}
N.~Gopalakrishnan and S.~Gelfand, ``Achievable rates for adaptive {IR} hybrid
  {ARQ},'' in \emph{2008 {IEEE} Sarnoff Symposium,}, Apr. 2008, pp. 1--6.

\bibitem{Kim11}
S.~M. Kim, W.~Choi, T.~W. Ban, and D.~K. Sung, ``Optimal rate adaptation for
  hybrid {ARQ} in time-correlated {R}ayleigh fading channels,'' \emph{{IEEE}
  Trans. Wireless Commun.}, vol.~10, no.~3, pp. 968--979, Mar. 2011.

\bibitem{Szczecinski13}
L.~Szczecinski, S.~Khosravirad, P.~Duhamel, and M.~Rahman, ``Rate allocation
  and adaptation for incremental redundancy truncated {HARQ},'' \emph{{IEEE}
  Trans. Commun.}, vol.~61, no.~6, pp. 2580--2590, Jun. 2013.

\end{thebibliography}

\bibliographystyle{IEEEtran}

\end{document}